\tikzset{snake it/.style={decorate, decoration=snake}}
\tikzstyle arrowstyle=[scale=1]
\tikzstyle directed=[postaction={decorate,decoration={markings,mark=at position .65 with {\arrow[arrowstyle]{stealth}}}}]
\tikzstyle reverse directed=[postaction={decorate,decoration={markings,mark=at position .65 with {\arrowreversed[arrowstyle]{stealth};}}}]
\tikzset{->-/.style={decoration={
  markings,
  mark=at position #1 with {\arrow{>}}},postaction={decorate}}}
\tikzset{-<-/.style={decoration={
  markings,
  mark=at position #1 with {\arrow{<}}},postaction={decorate}}}
\newcommand{\mCH}{\ensuremath{\mathcal{CH}}}
\newcommand{\td}{\ensuremath{\,\text{d}}}
\DeclareMathOperator{\WF}{WF}
\DeclareMathOperator{\supp}{supp}
\newcommand{\bR}{\ensuremath{\mathbb{R}}}
\newcommand{\bC}{\ensuremath{\mathbb{C}}}
\newcommand{\bN}{\ensuremath{\mathbb{N}}}
\newcommand{\bS}{\ensuremath{\mathbb{S}}}
\newcommand{\norm}[1]{\ensuremath{\left\Vert #1 \right\Vert}}
\newcommand{\abs}[1]{\ensuremath{\left\vert #1 \right\vert}}
\newtheorem{thm}{Theorem}[section]
\newtheorem{prop}[thm]{Proposition}
\newtheorem{lem}[thm]{Lemma}
\newtheorem{cor}[thm]{Corollary}
\theoremstyle{definition}
\begin{document}
\title{Universality of the quantum energy flux at the inner horizon of asymptotically de Sitter black holes}

\author{Peter Hintz}
\email{peter.hintz@math.ethz.ch}
\affiliation{Department of Mathematics, ETH Zürich,\\ Rämistrasse 101, 8092 Zürich, Switzerland}
\author{Christiane K.M. Klein}
\email{christiane.klein@univ-grenoble-alpes.fr}
\affiliation{Institut für Theoretische Physik, Universität Leipzig,\\ Brüderstraße 16, 04103 Leipzig, Germany}
\affiliation{Univ. Grenoble Alpes, CNRS, IF, 38000 Grenoble, France}
\affiliation{AGM, CY Cergy Paris Université, 2 av. Adolphe Chauvin 95302 Cergy-Pontoise, France}

\begin{abstract}
    Recently, it was found that the energy flux of a free scalar quantum field on a Reissner--Nordstr\"om--de~Sitter spacetime has a quadratic divergence towards the inner horizon of the black hole. Moreover, the leading divergence was found to be state independent as long as the spectral gap of the wave equation on the spacetime is sufficiently large. In this work, we show that the latter result can be extended to all subextremal Reissner--Nordstr\"om--de~Sitter and subextremal Kerr--de~Sitter spacetimes with a positive spectral gap.
\end{abstract}
\maketitle

\section{Introduction}
The inner horizons of charged or rotating black holes pose an interesting problem with regard to determinism in general relativity. They are examples for the appearance of Cauchy horizons, or in other words smooth boundaries of the maximal Cauchy development of complete initial data. Since the boundary is smooth, one can in principle extend the spacetime beyond the boundary. However, such an extension is not unique, since it is not fixed by the initial data sufficient for the unique characterization of the spacetime up to that horizon. In this sense, determinism in general relativity is lost beyond the Cauchy horizon.

It has been argued by Penrose \cite{Penrose:1974} that this issue can be settled by noting that the Cauchy horizons of charged and rotating black holes are unstable. In other words, a generic perturbation of the black hole's initial data will make the spacetime inextendible across the Cauchy horizon. This has become known as the strong cosmic censorship conjecture (sCC). 

To make the conjecture more precise, one has to define the notions of inextendibility and of generic perturbations. To guarantee that any observer attempting to cross the horizon is inevitably destroyed by tidal deformations, the metric should be inextendible as a continuous function, as happens for example at the central singularity of a Schwarzschild black hole \cite{Sbierski:2018}. However, it has been shown in \cite{Dafermos:2017} that this version of sCC fails for Kerr black holes with non-zero angular momentum; this assumes the non-linear stability of the black hole exterior under small asymptotically flat perturbations, which is known for small angular momenta \cite{Klainerman:2023}.

Thus, one must content oneself with the slightly weaker version of the conjecture introduced by Christodoulou \cite{Christodoulou:2008}. In this version, inextendibility is understood in the sense of a weak solution to the Einstein equations. In other words, this version demands the inextendibility of the metric as a function in the Sobolev space $H^1_{\rm loc}$. While this will not cause the inevitable destruction of the careless observer, it still indicates a fundamental breakdown of the classical theory of general relativity. The (in-)validity of this version of the conjecture is still an open question, and a subject of active research.

A first step towards understanding the conjecture can be  made in two ways. The first one is to study it in a symmetric setting, e.g.\ by restricting to perturbations that maintain spherical symmetry; by Birkhoff's theorem, this necessitates coupling gravity to additional fields. This approach has been used for example in \cite{Dafermos:2003, Gajic:2017,Luk:2019a,Luk:2019b,Kehle:2023} on Reissner--Nordstr\"om spacetimes coupled to a scalar field. 

Alternatively, one can consider solutions to the (massless) scalar wave equation on a fixed background containing a Cauchy horizon as a first approximation of either linearized gravity or a simplified matter model coupled to gravity. It has been demonstrated that the linear perturbations indeed cease to be in $H^1_{\rm loc}$
on (parts of) the Cauchy horizon of Reissner--Nordstr\"om \cite{Luk:2015} and Kerr \cite{Dafermos:2015}.

Moreover, it has been shown \cite{Hintz:2015} that on Reissner--Nordstr\"om--de~Sitter (RNdS) and slowly rotating Kerr--de~Sitter (KdS) spacetimes solutions to the scalar wave equation are in $H^{1/2+\beta-0}$ near the Cauchy horizon. Here, $\beta=\alpha/\kappa_1$, where $\alpha$ is the spectral gap of quasi-normal modes, and $\kappa_1$ is the surface gravity of the Cauchy horizon. Hence, in this linear approximation, strong cosmic censorship is directly related to the spectral gap of the black hole's quasi-normal modes. Subsequent numerical studies of the quasi-normal modes \cite{Cardoso:2017} revealed that one can find $\beta>1/2$ for RNdS black holes of sufficiently large charge, and non-linear studies \cite{Hintz:2018, Hintz:2018b, Costa:2018, Luna:2019} (albeit the last two restrict to spherical symmetry) indicate that this remains valid, in other words $\beta$ remains the decisive quantity for determining the (in)stability of the Cauchy horizon when the non-linearities of the Einstein equations are taken into account. 

This raised the following question: since the scalar perturbations can be considered as part of a simple matter model, and since matter is, to our knowledge, most accurately described by quantum theory, can the inclusion of quantum effects potentially remedy the situation? To answer this question, the authors of \cite{Hollands:2019} study a free scalar quantum field on a fixed RNdS background spacetime, focusing on cases in which \cite{Cardoso:2017} had found $\beta$ to be larger than $1/2$. They find that the stress-energy tensor of the quantum scalar field in a generic Hadamard state can be split into a state-independent term and a state-dependent term. The state-dependent term is subsequently found to be in $L^{p}$, $p=(2-2\beta)^{-1}$, near the Cauchy horizon and therefore it does not diverge faster than the classical stress-energy tensor. The state-independent term is computed numerically in \cite{Hollands:2020} for a range of spacetime parameters. It has a generically non-vanishing quadratic divergence at the Cauchy horizon, in agreement with numerical results on Reissner--Nordstr\"om  and Kerr spacetimes \cite{Zilberman:2019, Zilberman:2022}. This quadratic divergence is stronger than the divergence of the classical stress-energy tensor, and therefore stronger than the state-dependent contribution, as long as the state is Hadamard and $\beta>1/2$. Consequently, the divergence of the quantum stress-energy tensor of the free scalar field at the Cauchy horizon of a RNdS spacetime satisfying $\beta>1/2$ is not only sufficiently strong to potentially remedy sCC, but also universal in the sense that it does not depend on the choice of state as long as the state is physically reasonable.

In this work we want to show that this universality holds more generally. We consider the stress-energy tensor of a free, real scalar quantum field near the Cauchy horizon of a RNdS or KdS spacetime merely satisfying $\beta>0$, i.e.\ having a positive spectral gap. We demonstrate that for a generic state $\omega$ which is Hadamard up to the Cauchy horizon, the expectation value of the stress-energy tensor can be split into a state-independent term  and a state-dependent term. 

The state-independent term can be computed numerically, and is expected to diverge quadratically, i.e.\ like  $(r-r_1)^{-2}$, at the Cauchy horizon, similar to the results obtained in  \cite{Hollands:2020,Zilberman:2019, Zilberman:2022}.
The state-dependent term is shown to diverge at most like $(r-r_1)^{-2+\beta}$ at the Cauchy horizon, where $r$ is the usual radial coordinate of these spacetimes and $r_1$ the radius of the Cauchy horizon. (Strictly speaking, we show an upper bound of $(r-r_1)^{-2+\beta'}$ for some $0<\beta'<\beta$, though conjecturally one can take $\beta'$ arbitrarily close to $\beta$.)

In other words, we show that under two conditions---first, $\beta>0$, i.e.\ there is a positive spectral gap and mode stability holds, and second, the state-independent term is generically non-vanishing---the state-independent term constitutes the leading divergence of the quantum stress-energy tensor at the Cauchy horizon, extending the universality result to all $\beta>0$. Moreover, under the same conditions, the universal leading divergence of the quantum stress-energy tensor at the Cauchy horizon is stronger than the divergence of the classical stress-energy tensor. The condition $\beta>0$ is known to hold for scalar fields on all subextremal RNdS spacetimes (for zero and non-zero scalar field masses) as well as on KdS spacetimes which are either slowly rotating \cite{Dyatlov:2011} or have small mass \cite{Hintz:2021} (for zero and also for small non-zero scalar field masses by \cite[Lemma~3.5]{HintzVasy:2015}), and it is conjectured to hold in the full subextremal KdS range \cite{Yoshida:2010,Hatsuda:2020}.

This leads to the conclusion that perturbations caused by quantum effects should not be neglected in considerations of Cauchy horizon stability, since they will eventually become comparable in size to, and ultimately dominate classical perturbations as the Cauchy horizon is approached. To estimate how close to the Cauchy horizon this happens, one can, for example, compare the $rr$-components of the stress-energy tensor of the classical and quantum scalar field near the Cauchy horizon under the assumption that the leading contribution to the quantum result is non-vanishing. Taking into account the  pointwise bounds for the classical stress-energy tensor, and the relative size of quantum and classical results away from the horizon, which should typically be of order $r_P/L$, one finds that the $rr$-component of the quantum stress-energy tensor will become dominant when $r-r_1\ll r_P(r_P/L)^{1/\beta-1}$, where $L$ is a typical length scale of the spacetime, for example $M$.

We will begin our discussion with an introduction of the geometric and field-theoretic setup in section \ref{sec:setup}. We will also use this section to recall some results on Hadamard states. Section \ref{sec:reg from decay}  will demonstrate how the decay of the solutions of the wave equations towards $i^+$ can be translated into an estimate on the behavior at the Cauchy horizon. This estimate will be employed in section \ref{sec:bound} to bound the divergence of the state-dependent part of the expectation value of the stress-energy tensor in a generic Hadamard state $\omega$. Concluding remarks will be given in section \ref{sec: Conclusion}.

\section{Setup}
\label{sec:setup}
\subsection{The RNdS and KdS spacetimes}
In this work, we will consider subextremal RNdS and KdS spacetimes. They can be described by a metric of the form 
\begin{align}
\label{eq:RNdS_g_BL}
   g^{\text{RNdS}}_{\lambda, Q}=-\frac{\Delta_r^{\text{RNdS}}}{r^2}\td t^2+\frac{r^2}{\Delta_r^{\text{RNdS}}}\td r^2+r^2\td\Omega^2\, ,
\end{align}
with 
\begin{align}
   \Delta_r^{\text{RNdS}}=-\lambda r^4+r^2-2Mr+Q^2 
\end{align}
for RNdS and
\begin{align}
\label{eq:KdS_g_BL}
    g^{\text{KdS}}_{\lambda, a}&=\frac{\Delta_\theta a^2\sin^2\theta-\Delta_r^{\text{KdS}}}{\rho^2\chi^2}\td t^2+\left[\Delta_\theta(r^2+a^2)^2-\Delta_r^{\text{KdS}}a^2\sin^2\theta\right]\frac{\sin^2\theta}{\rho^2\chi^2}\td\varphi^2\\
\nonumber &+\frac{\rho^2}{\Delta_r^{\text{KdS}}}\td r^2+\frac{\rho^2}{\Delta_\theta}\td \theta^2+2\frac{a\sin^2\theta}{\rho^2\chi^2}[\Delta_r^{\text{KdS}}-\Delta_\theta(r^2+a^2)]\td t\td\varphi\, ,
\end{align}
with
\begin{align}	
\label{eq:D_t,rho,chi}
    \Delta_r^{\text{KdS}}&=(1-\lambda r^2)(r^2+a^2)-2Mr\, , &  \Delta_\theta &=  1 + a^2 \lambda \cos^2\theta\, , \\ \rho^2&=r^2+a^2\cos^2\theta\, , &  \chi&=1+a^2\lambda\, ,
\end{align}
for KdS. Throughout this discussion, we will choose the scale for the coordinates $r$ and $t$ such that the black hole mass $M$ is set to one. The black hole's charge $Q$ or angular momentum parameter $a$ as well as the cosmological constant $\Lambda=3 \lambda$ are chosen such that the functions $\Delta_r^{\#}$, with $\#$ replacing either "RNdS" or "KdS", have three real distinct positive roots $r_1<r_2<r_3$ indicating the locations of the cosmological horizon $(r_3)$, the outer $(r_2)$ and the inner horizon $(r_1)$ of the black hole. 

The coordinate singularities at the horizons can be eliminated by introducing advanced or retarded time coordinates defined by
\begin{align}
\label{eq:tpm}
    \td t_\pm =\td t\pm \frac{\chi (r^2+a^2)}{\Delta_r^{\#}}\td r\, ,
\end{align}
where we set $\chi=1$ and $a=0$ for RNdS. For Kerr--de~Sitter, one has to introduce in addition the azimuthal coordinates
\begin{align}
\label{eq:phipm}
    \td \varphi_\pm =\td \varphi \pm \frac{\chi a }{\Delta_r^\#}\td r\, .
\end{align}
The coordinates $(t_\pm, r, \theta,\varphi_\pm)$ then allow an extension of the metric through the outgoing/ingoing pieces of the horizons. It takes the form
\begin{align}
    g^\#=g^\#_{tt} \td t_\pm^2\pm 2\frac{1}{\chi}\td t_\pm\td r+g^\#_{\theta\theta}\td \theta^2+g^\#_{\varphi\varphi}\td\varphi_\pm^2\mp 2\frac{a\sin^2\theta}{\chi}\td\varphi_\pm\td r\, ,
\end{align}
where again, we set $\chi=1$ and $a=0$ for the RNdS case, and $g^\#_{\mu\nu}$ are the components of the corresponding metric in the Boyer-Lindquist coordinates as given in \eqref{eq:RNdS_g_BL} and \eqref{eq:KdS_g_BL}.

It should be mentioned that none of these coordinate systems cover the axis where $\sin\theta=0$. However, it has been shown that the metric can be analytically extended to the axis as well using a suitable coordinate transformation \cite{Hintz:2015, Borthwick:2018}.

The physical RNdS and KdS spacetimes respectively will be the manifolds $\bR_{t_+}\times(r_1,r_3)_r\times\bS^2_{(\theta,\varphi_+)}$ glued to $\bR_{t_-}\times(r_2,\infty)_r\times \bS^2_{(\theta,\varphi_-)}$ on $\{r_2<r<r_3\}$, equipped with the metric $g^{\text{RNdS}}_{\lambda,Q}$ or $g^{\text{KdS}}_{\lambda,a}$. More details on the advanced/retarded time coordinates, the extended spacetime and the gluing can be found in \cite{Mokdad:2017, Borthwick:2018}. We will refer to the physical spacetime as $\mathcal{M}$; see also Figure \ref{fig: Penrose diagram full}. 

The physical RNdS  and KdS spacetimes are globally hyperbolic. For RNdS, this follows from the analysis in \cite{Mokdad:2017}, combined with the fact that the physical spacetime $\mathcal{M}$ is a causally convex subset of the union of Kruskal domains around $r=r_2$ and $r=r_3$ discussed in \cite{Mokdad:2017}. For KdS, this was shown by an explicit construction in \cite{Klein:2022}.

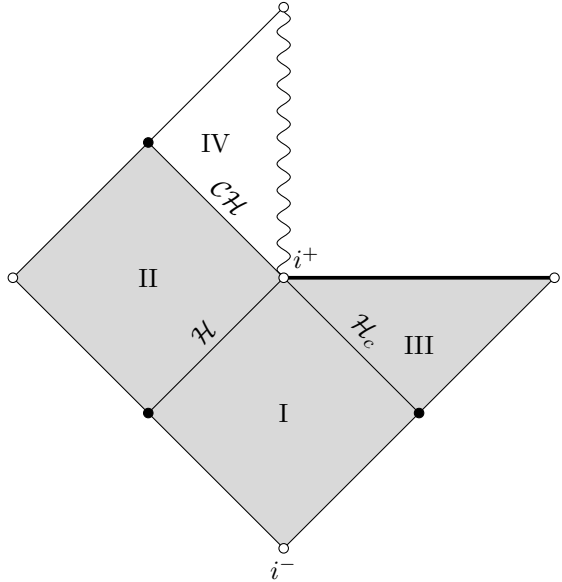
\begin{figure}
    \centering
\begin{tikzpicture}[scale=0.9]
\path[fill=gray!30](0,0)--(-4,4)--(-2,6)--(0,4)--(4,4)--(0,0);
 \draw (0,0) -- (-2,2) ;
 \draw (-2,2) -- (-4,4) ;
 \draw (0,0) -- (2,2) ;
 \draw (2,2) -- (0,4) node[midway,above,sloped]{$\mathcal{H}_c$};
 \draw (-2,2) -- (0,4) node[midway, above, sloped]{$\mathcal{H}$};
 \draw (0,4) -- (-2,6) node[midway, above, sloped]{$\mathcal{C}\mathcal{H}$};
 \draw (-4,4) -- (-2,6);

 \draw[snake it] (0,4) -- (0,8);
 \draw (-2,6) -- (0,8);
 \draw[double=black] (0,4) -- (4,4); 
 \draw (2,2) -- (4,4);
 \draw (0,2) node{${\rm I}$};
 \draw (2,3) node{${\rm III}$};
 \draw (-2,4) node{${\rm II}$};
 \draw (-1,6) node{${\rm IV}$};
% \draw[densely dotted] (0,0) .. controls (-0.8,4) .. (0,5);

 \draw[fill=white] (-4,4) circle (2pt);
 \draw[fill=white] (4,4) circle (2pt);
 \draw[fill=white] (0,8) circle (2pt);
 \draw[fill=black] (-2,6) circle (2pt);
 \draw[fill=white] (0,0) circle (2pt) node[below]{$i^-$};
 \draw[fill=black] (-2,2) circle (2pt);
 \draw[fill=white] (0,4) circle (2pt) node[above right]{$i^+$};
 \draw[fill=black] (2,2) circle (2pt);
  \end{tikzpicture}
  \caption{The Penrose diagram of the subextremal RNdS spacetime, or the Carter--Penrose diagram for the subextremal KdS spacetime. The gray region indicates our physical spacetime $\mathcal{M}$, while the diagram shows also the analytic extension across $\mCH$.}
  \label{fig: Penrose diagram full}
\end{figure}  

In this work, we focus mostly on the Cauchy horizon, which is the future boundary of $\mathcal{M}$ considered as a submanifold of its analytic extension. More specifically, we focus on the ingoing part of the Cauchy horizon, cf.\ Figure \ref{fig: Penrose diagram full}. The metric can be analytically extended through the ingoing piece of the  Cauchy horizon in the retarded coordinates $(t_-,r,\theta,\varphi_-)$.

In Section \ref{sec:reg from decay} we will analyse the classical wave equation on a domain $\Omega\subset \mathcal{M}$ which encompasses the relevant part of the Cauchy horizon. More specifically, the domain $\Omega$ is bounded in the past by a hypersurface of constant $r$, and in the future by a piece of the Cauchy horizon as well as a spacelike hypersurface transversal to the Cauchy horizon, as indicated in Figure \ref{fig: Omega}.

\begin{figure}
    \centering
    \includegraphics{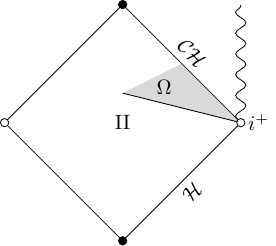}
    \caption{Illustration of the domain $\Omega$, in which the results of decay towards $i^+$ are propagated and converted to regularity results at $\mCH$.}
    \label{fig: Omega}
\end{figure}

\subsection{The free scalar field theory}
\label{subsec: scalar field}
We consider a free scalar field on the physical RNdS or KdS spacetime $\mathcal{M}$ satisfying the Klein--Gordon equation
\begin{align}
\label{eq:KGE}
    \mathcal{P}\phi=0\, ,\quad \mathcal{P}=\Box_g-\left(m^2+\xi R\right)\, ,
\end{align}
where $\Box_g$ is the d'Alembert operator, $R=4\Lambda>0$ is the Ricci scalar of RNdS or KdS respectively, and $m\geq 0$, $\xi\geq 0$ are constants. The quantum theory for this field can be described by the CCR-algebra $\mathcal{A}(\mathcal{M})$ of quasi-local observables, which can be defined as the free *-algebra generated by the identity $\mathbf{1}$ and the smeared field operators $\Phi(f)$, $f\in \mathcal{C}_0^\infty$, subject to the relations
\begin{itemize}
\item  $\Phi(\alpha f+ g)=\alpha \Phi(f)+ \Phi(g)\quad \forall f,g\in \mathcal{C}_0^\infty(\mathcal{M})$, $\alpha \in \bC$
\item  $\Phi(\mathcal{P}f)=0 \quad\forall f\in \mathcal{C}_0^\infty(\mathcal{M})$
\item  $(\Phi(f))^*=\Phi(\bar f)\quad \forall f\in \mathcal{C}_0^\infty(\mathcal{M})$
\item  $\left[\Phi(f),\Phi(g)\right]=iE(f,g)\mathbf{1} \quad\forall f,g\in \mathcal{C}_0^\infty(\mathcal{M})$ .
\end{itemize}
Here, $E$ is the commutator function, or Pauli--Jordan propagator of $\mathcal{P}$. It is constructed as the difference between the unique retarded and advanced Green's operators $E^\pm$ for the Klein--Gordon operator $\mathcal{P}$. 

A state in this framework is a linear map $\omega:\mathcal{A}(\mathcal{M})\to \bC$ satisfying $\omega(\mathbf{1})=1$ and $\omega(A^*A)\geq 0$ for all $A\in \mathcal{A}(\mathcal{M})$. 
It is called a quasi-free Hadamard state if it is entirely determined by its two-point function
\begin{align*}
    w(f,h)=\omega(\Phi(f)\Phi(h))\, ,
\end{align*}
and if the wavefront set of the two-point function, considered as a distribution on $\mathcal{M}\times\mathcal{M}$, satisfies the microlocal spectrum condition \cite{Radzikowski:1996}
\begin{align}
\label{eq:muloc spec cond}
\WF^\prime(w)&=\mathcal{C}^+\, ,\\
\mathcal{C}^\pm&=\{(x,k;y,l)\in T^*(\mathcal{M}\times \mathcal{M}):(x,k)\sim (y,l)\text{ and } \pm k \rhd 0\}\, .
\label{eq:Cpm}
\end{align}
Here, $(x,k;y,l)\in \WF^\prime(w)$ if and only if $(x,k;y,-l)\in \WF(w)$. A point $(x,k)\in T^*\mathcal{M}$ is related to $(y,l)\in T^*\mathcal{M}$ by $\sim$, $(x,k)\sim (y,l)$, if $x$ and $y$ are connected by a null geodesic to which $k$ is the cotangent vector at $x$, and $l$ agrees with $k$ coparallel transported along the geodesic to $y$. In other words, $(x,k)\sim(y,l)$ if the two points lie in the same bicharacteristic strip of $\mathcal{P}$. The notation $k\rhd 0 $ means that $k(v)>0$ for all time-like future-pointing vectors in $T_x\mathcal{M}$, i.e.\ $k$ is a future-pointing covector. Physically reasonable states are usually required to satisfy the microlocal spectrum condition, since it allows one to extend the algebra to include important observables such as the (smeared) stress-energy tensor \cite{Hollands:2001, Hollands:2001b}, and it results in finite expectation values with finite variance for these observables \cite{Brunetti:1999, Fewster:2013}. The condition that the state is quasi-free is made here for simplification of notation in the microlocal spectrum condition, and should not influence the following arguments, since we will be mainly interested in the two-point function.

The most relevant property of Hadamard states for our purpose is the following: consider any two Hadamard states $\omega$, $\omega^\prime$ on the CCR-algebra $\mathcal{A}(M)$ of any globally hyperbolic spacetime $(M,g)$. Then the difference of two-point functions considered as a distribution on $M\times M$,
\begin{align}
    W[\omega, \omega^\prime]\in \mathcal{D}^\prime(M\times M)\, ,\quad W[\omega, \omega^\prime](f,h)=\omega(\Phi(f)\Phi(h))-\omega^\prime(\Phi(f)\Phi(h)),\quad f,h\in \mathcal{C}_0^\infty(M),
\end{align}
is of the form
\begin{align}
    W[\omega,\omega^\prime](f,h)=\int\limits_{M\times M}W[\omega,\omega^\prime](x,y)f(x)h(y)\td \mu(x)\td \mu(y)\, ,
\end{align}
with $\td\mu$ the volume form induced on $M$ by the metric $g$ and $W[\omega,\omega^\prime](x,y)\in \mathcal{C}^\infty(M\times M)$. Moreover, this function is a real, symmetric bi-solution to the Klein--Gordon equation,
\begin{align*}
  \mathcal{P}(x)W[\omega,\omega^\prime](x,x^\prime)=\mathcal{P}(x^\prime)W[\omega,\omega^\prime](x,x^\prime)=0\, .  
\end{align*}
An important consequence of the smoothness becomes apparent when one considers the (smeared) local, non-linear observables of the theory. These observables, which lie in the extension of the CCR-algebra $\mathcal{A}(M)$, can be written as linear combinations of locally and covariantly renormalized Wick powers of differentiated fields ${:}\left(\prod_{i=1}^n\mathcal{D}_i\Phi\right)(f){:}$ \cite{Hollands:2001}. Here, $\mathcal{D}_i$ are (not necessarily scalar) differential operators, $f\in \mathcal{C}_0^\infty(M)$ is a smearing function, and the double dots indicate that this quantity has been renormalized, with all renormalization ambiguities fixed in some way. Focusing on the case $n=2$, which encompasses relevant observables such as the stress-energy tensor, the smoothness of $W[\omega,\omega^\prime](x,x^\prime)$ allows us to write
\begin{align}
\label{eq:point-split}
    \omega({:}\left(\mathcal{D}_1\Phi\mathcal{D}_2\Phi\right)(x){:})-\omega^\prime({:}\left(\mathcal{D}_1\Phi\mathcal{D}_2\Phi\right)(x){:})=\lim\limits_{x^\prime\to x}\left(g(x,x^\prime) \mathcal{D}_1(x) \mathcal{D}_2(x^\prime) W[\omega,\omega^\prime](x,x^\prime)\right)\, ,
\end{align}
where ${:}\left(\mathcal{D}_1\Phi\mathcal{D}_2\Phi\right)(x){:}$ should be understood as operator-valued distributions, and $g(x,x^\prime)$ is the proper power of the parallel transport bi-tensor $g_\mu^{\nu^\prime}(x,x^\prime)$ mapping $T_{x^\prime}M$ to $T_x M$, so that the result is a tensor at $x$ in the case when the derivative operator $\mathcal{D}_2$ is not scalar. The right hand side is the coinciding point limit of a $\mathcal{C}^\infty$-function on $M\times M$, see \cite[H1)-H3)]{Hollands:2019}, and hence a smooth function on $M$. Therefore, the expression on the left hand side, which should a priori be understood in a distributional sense, is a smooth function as well. Indeed, it follows from the conditions in the local and covariant renormalization scheme that also $\omega({:}\Phi^k(x){:})$ is a smooth function on $M$ as long as $\omega$ is a Hadamard state \cite{Hollands:2001}, implying that one can discuss the expectation values of Wick powers without smearing. These results will become crucial in Section~\ref{sec:bound}.

\section{Expansion of classical solutions}
\label{sec:reg from decay}

Our aim in this section is to prove sharp pointwise bounds for scalar fields near the Cauchy horizon of subextremal RNdS and KdS spacetimes (see Corollary~\ref{cor:wave-CH}). We begin by discussing the massless scalar wave equation on a fixed subextremal KdS background $g=g_{\lambda,a}^{\rm KdS}$; see \eqref{eq:KdS_g_BL}. We write $\Delta_r=\Delta_r^{\rm KdS}$. Passing to the coordinates $t_\pm$ and $\varphi_\pm$ from~\eqref{eq:tpm}--\eqref{eq:phipm} amounts to replacing $\partial_t,\partial_\varphi$, and $\partial_r$ by $\partial_{t_\pm}$, $\partial_{\varphi_\pm}$, and $\partial_r\pm\frac{\chi}{\Delta_r}((r^2+a^2)\partial_{t_\pm}+a\partial_{\varphi_\pm})$, respectively, so the wave operator
\[
  \rho^2\Box_g = -\frac{\chi^2}{\Delta_r}\bigl((r^2+a^2)\partial_t+a\partial_\varphi)^2 + \frac{\chi^2}{\Delta_\theta\sin^2\theta}(a\sin^2\theta\,\partial_t+\partial_\varphi)^2 + \partial_r \Delta_r \partial_r + \frac{1}{\sin\theta}\partial_\theta\Delta_\theta\sin\theta\,\partial_\theta
\]
becomes
\begin{equation}
\label{eq:Boxpm}
\begin{split}
  \rho^2\Box_g &= \partial_r\Delta_r\partial_r \pm \chi\bigl((r^2+a^2)\partial_{t_\pm}+a\partial_{\varphi_\pm}\bigr)\partial_r \pm \partial_r \chi\bigl((r^2+a^2)\partial_{t_\pm}+a\partial_{\varphi_\pm}\bigr) \\
    &\qquad + \frac{\chi^2}{\Delta_\theta\sin^2\theta}(a\sin^2\theta\,\partial_{t_\pm}+\partial_{\varphi_\pm})^2 + \frac{1}{\sin\theta}\partial_\theta\Delta_\theta\sin\theta\,\partial_\theta.
\end{split}
\end{equation}
Recall that the coordinates $t_+,\varphi_+$ are valid in the union of the regions I and II as well as the future event horizon $\mathcal H$ in Figure~\ref{fig: Penrose diagram full}, while $t_-,\varphi_-$ are valid in the union of the regions I and III as well as the cosmological horizon $\mathcal{H}_c$, and also in the union of the regions II and IV as well as the Cauchy horizon $\mathcal{CH}$; the level sets of $t_+$ are transversal to the future event horizon, and the level sets of $t_-$ are transversal to the cosmological horizon (in regions I and III) and to the Cauchy horizon (in regions II and IV).

Let $t_*$ denote a time function in the union of regions I, II, and III which in regions I and II differs from $t_+$ by a smooth function of $r\in(r_1,r_3)$, and which in regions I and III differs from $t_-$ by a smooth function of $r\in(r_2,\infty)$. We may choose such a function $t_*$ to have spacelike level sets (such as $\Sigma_-$ and $\Sigma_+$ in Figure~\ref{fig: illustration}). We now recall:

\begin{thm}[\cite{Petersen:2021}]
\label{thm:PetersenVasy}
  Let $r_-\in(r_1,r_2)$ and $r_+>r_3$. Write $e^{-\alpha t_*}H^s$ for the space of functions $\psi=\psi(t_*,x)$ (where $x\in\mathbb{R}^3$ denotes Cartesian coordinates on $(r_-,r_+)\times\mathbb{S}^2\subset\mathbb{R}^3$) with support in $t_*\geq 0$ so that
  \[
    \|\psi\|_{e^{-\alpha t_*}H^s}^2 := \sum_{j+|\beta|\leq s} \|e^{\alpha t_*}\partial_{t_*}^j\partial_x^\beta \psi\|_{L^2}^2 < \infty,
  \]
  where $\|\cdot\|_{L^2}$ is the spacetime $L^2$-norm. Then there exists $\alpha_1>0$ so that the following holds. Let $s>\frac12+\alpha_1\max(\frac{1}{\kappa_2},\frac{1}{\kappa_3})$, where $\kappa_j$ denotes the surface gravity of the horizon $r=r_j$. Then for $\alpha'<\alpha_1$ and for all $b\in e^{-\alpha' t_*}H^s$, the unique retarded solution of $\Box_g\psi=b$ has an asymptotic expansion
  \begin{equation}
  \label{eq:expansion}
      \psi - \sum_{j=1}^N\sum_{k=0}^{k_j-1} t_*^k e^{-i\sigma_j t_*}v_{j k} = \tilde\psi \in e^{-\alpha' t_*}H^s,
  \end{equation}
  where the $\sigma_1,\ldots,\sigma_N$ are the finitely many quasinormal modes with $\Im\sigma_j>-\alpha_1$, $k_j$ is the multiplicity of $\sigma_j$, and $e^{-i\sigma_j t_*}\sum_{k=0}^{k_j} t_*^k v_{j k}$ is a corresponding (smooth) mode solution. Furthermore, $\|\tilde\psi\|_{e^{-\alpha' t_*}H^s}\leq C\|b\|_{e^{-\alpha' t_*}H^s}$ for some constant $C$ depending only on $r_-,r_+,\alpha',s$, and the KdS black hole parameters.
\end{thm}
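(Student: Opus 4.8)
The statement is an instance of the resonance-expansion machinery for wave equations on asymptotically de Sitter black holes, so the plan is to reduce the forward evolution problem to the meromorphic study of a stationary spectral family and then read off the asymptotics by deforming a contour in the inverse Mellin transform. At the outset I would note that the retarded solution exists, is supported in $t_*\geq 0$, and grows at most exponentially in $t_*$ by the standard well-posedness and energy estimates on the globally hyperbolic region, which is what will justify the transform below.

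First I would exploit that, in the $t_*$ coordinates, the coefficients of the rescaled operator $\rho^2\Box_g$ in \eqref{eq:Boxpm} are independent of $t_*$. Taking the Mellin transform $\hat\psi(\sigma,x)=\int e^{i\sigma t_*}\psi(t_*,x)\,\td t_*$ (well defined on a sufficiently high contour $\Im\sigma=C$ by the exponential a priori bound) converts $\Box_g\psi=b$ into $P(\sigma)\hat\psi(\sigma)=\hat b(\sigma)$, where $P(\sigma)$ is obtained by substituting $\partial_{t_*}\mapsto -i\sigma$. The crucial structural point is that $P(\sigma)$ is a non-elliptic operator on the spatial slice $(r_-,r_+)\times\bS^2$ whose characteristic set contains radial sets over the horizons $r=r_2$ and $r=r_3$, with the artificial boundaries at $r_\pm$ lying outside the relevant domain of influence (or treated by complex absorption).

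Next I would establish the Fredholm and invertibility properties. Using Vasy's radial-point estimates at $r_2,r_3$ together with real-principal-type propagation in the interior, $P(\sigma)\colon H^s\to H^{s-1}$ is Fredholm provided $s$ exceeds the threshold regularity at the radial sets; this threshold is exactly $s>\tfrac12+\alpha_1\max(\kappa_2^{-1},\kappa_3^{-1})$, since the skew-symmetric (subprincipal) part of $P(\sigma)$ at $r=r_j$ scales like $\Im\sigma/\kappa_j$, and to invert throughout the strip $\Im\sigma\geq -\alpha'$ one needs the worst-case threshold over $|\Im\sigma|\leq\alpha_1$. Analytic Fredholm theory then renders $\sigma\mapsto P(\sigma)^{-1}$ meromorphic, with poles precisely at the quasinormal modes $\sigma_j$ and with the pole orders and Jordan structure encoding the multiplicities $k_j$. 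What remains is a resonance-free strip $\Im\sigma>-\alpha_1$ up to finitely many $\sigma_j$, together with polynomial high-energy bounds on $\|P(\sigma)^{-1}\|$ as $|\Re\sigma|\to\infty$ within that strip.

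Finally I would recover the expansion by writing $\psi(t_*)=\frac{1}{2\pi}\int_{\Im\sigma=C}e^{-i\sigma t_*}P(\sigma)^{-1}\hat b(\sigma)\,\td\sigma$ and shifting the contour to $\Im\sigma=-\alpha'$. The shift picks up residues at the finitely many $\sigma_j$ with $\Im\sigma_j>-\alpha'$, producing the terms $t_*^k e^{-i\sigma_j t_*}v_{j k}$ with $k\leq k_j-1$ from the order of each pole, while the leftover integral along $\Im\sigma=-\alpha'$, controlled by the high-energy bounds and a Paley--Wiener characterization of $e^{-\alpha' t_*}H^s$, yields $\tilde\psi\in e^{-\alpha' t_*}H^s$ with the asserted estimate. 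I expect the high-energy analysis to be the main obstacle: uniform polynomial bounds on $P(\sigma)^{-1}$ in the strip are obstructed by the normally hyperbolic photon-sphere trapping, whose resolution (following Wunsch--Zworski and Dyatlov) is the hard technical core. This trapping analysis is also what fixes the loss encoded in $\alpha_1$ and, in combination with mode stability, secures the resonance-free strip.
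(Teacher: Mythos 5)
The paper does not actually prove this theorem---it quotes it from the cited reference (Petersen--Vasy), and your outline reproduces precisely the strategy of that proof: Fourier--Laplace/Mellin transform to a stationary spectral family $P(\sigma)$, Fredholm theory via radial-point estimates at the horizons with the stated regularity threshold $s>\frac12+\alpha_1\max(\kappa_2^{-1},\kappa_3^{-1})$, meromorphic continuation with poles at the quasinormal modes, high-energy resolvent bounds in a strip whose width $\alpha_1$ is dictated by the normally hyperbolic trapping at the photon sphere, and contour deformation yielding the finitely many mode terms plus a remainder in $e^{-\alpha' t_*}H^s$. Your proposal is therefore correct and follows essentially the same route as the cited source; the only minor inaccuracy is attributing the resonance-free strip in part to mode stability, which is not needed for this expansion theorem (finiteness of resonances in the strip follows from meromorphy plus the high-energy estimates; mode stability enters only separately, in Theorem~\ref{thm:mode-stability}, to identify the surviving modes).
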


\begin{thm}
\label{thm:mode-stability}
  In the notation of Theorem~\ref{thm:PetersenVasy}, mode stability holds---that is, the quasinormal mode $\sigma_1=0$ has $k_1=1$ and corresponding mode solution equal to a constant, and all other $\sigma_j$ have $\Im\sigma_j<0$---under either one of the following two conditions.
  \begin{enumerate}
  \item The KdS black hole is slowly rotating, i.e.\ $0<9\Lambda M^2<1$ and $|a/M|\leq C(\Lambda M^2)$ where $C\colon(0,1)\to(0,\infty)$ is a positive continuous function \cite{Dyatlov:2011,Dyatlov:2012}.
  \item The KdS black hole has a small mass, i.e.\ $|a/M|\in[0,1)$ and $0<\Lambda M^2\leq C(|a/M|)$ where $C\colon[0,1)\to(0,\infty)$ is a positive continuous function \cite{Hintz:2021}.
  \end{enumerate}
\end{thm}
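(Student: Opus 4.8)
The plan is to reformulate mode stability as a statement about the stationary spectral family and then verify it region by region in the complex $\sigma$-plane, anchoring the argument at the two known limiting geometries. Fourier transforming $\rho^2\Box_g$ in $t_*$ (i.e.\ replacing $\partial_{t_*}$ by $-i\sigma$ in \eqref{eq:Boxpm}) yields a family of operators $P(\sigma)$ on the spatial slice $(r_-,r_+)\times\bS^2$, acting on functions smooth up to the horizons $r=r_2,r_3$; the quasinormal modes $\sigma_j$ of Theorem~\ref{thm:PetersenVasy} are exactly the $\sigma$ for which $P(\sigma)$ has nontrivial kernel, and $k_j$ records the order of the corresponding pole of $P(\sigma)^{-1}$. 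Mode stability is then the assertion that $P(\sigma)$ is invertible for all $\sigma$ with $\Im\sigma\geq 0$ except $\sigma=0$, where $\ker P(0)$ consists precisely of the constants and there is no Jordan block (so $k_1=1$). I would establish this by treating the open upper half-plane, the point $\sigma=0$, and the punctured real axis separately.

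For $\Im\sigma>0$ I would use a boundary-pairing/energy identity: given a putative mode $u$ with $P(\sigma)u=0$, pair the equation with $\bar u$ and integrate by parts over the stationary region bounded by the event and cosmological horizons. Because the smoothness condition at $r=r_2,r_3$ makes $u$ outgoing there, the horizon flux terms acquire a definite sign proportional to $\Im\sigma$ (this is the analytic face of the red-shift effect, powered by the positivity of the surface gravities $\kappa_2,\kappa_3$), and combined with the bulk terms this forces $u\equiv 0$. For $\sigma=0$ the equation $P(0)u=0$ is the stationary wave equation $\Box_g u=0$, which in the static exterior region $r_2<r<r_3$ is genuinely elliptic and degenerates only at the horizons; a degenerate-elliptic energy estimate (or maximum principle) then shows that the only smooth solutions are constants, giving $\ker P(0)=\bC$. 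Simplicity ($k_1=1$) reduces to checking that the constant mode $u_0\equiv 1$ is not the top of a Jordan chain, i.e.\ that $P(0)u_1=-\partial_\sigma P(0)\,u_0$ has no solution smooth up to the horizons; this single solvability obstruction is the pairing of $\partial_\sigma P(0)\,u_0$ against a generator of the cokernel of $P(0)$, and it is non-vanishing.

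The crux is the punctured real axis $\sigma\in\bR\setminus\{0\}$, and this is exactly where the two hypotheses enter and where I expect the \emph{main obstacle} to lie. For a rotating black hole the naive energy argument fails on the real axis because of superradiance, so a real nonzero mode cannot be excluded by a sign alone; one must argue perturbatively from a reference geometry where the real axis is already understood. In case~(1) the reference is Schwarzschild--de~Sitter ($a=0$), for which mode stability is known \cite{Dyatlov:2011,Dyatlov:2012}: since the $\sigma_j$ depend continuously (indeed holomorphically) on the parameters by analytic Fredholm theory, and since a uniform semiclassical estimate at the $r$-normally hyperbolic trapped set confines quasinormal modes with large $|\Re\sigma|$ to $\Im\sigma\leq-c<0$, a compactness-plus-continuity argument keeps every nonzero mode strictly below the real axis once $|a/M|$ is small. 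In case~(2) the small parameter is $\Lambda M^2$ and the reference is the Kerr black hole, whose mode stability is known throughout the subextremal range; the same continuity argument, now as $\Lambda M^2\to 0$, transfers the conclusion to KdS with small $\Lambda M^2$ and arbitrary subextremal $|a/M|$ \cite{Hintz:2021}.

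The hard part is making the perturbation uniform: one must control the quasinormal spectrum simultaneously in the bounded-frequency regime (via continuity of the discrete poles of $P(\sigma)^{-1}$) and in the high-frequency regime (via the semiclassical trapping estimate), and ensure that no mode migrates from the lower half-plane onto or across the real axis as the small parameter is switched on. This uniform perturbation analysis is precisely what is carried out in \cite{Dyatlov:2011,Dyatlov:2012} and \cite{Hintz:2021}, so after the reformulation above the proof amounts to invoking those results for the two regimes rather than reproving the spectral perturbation theory here.
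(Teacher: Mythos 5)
The paper does not actually prove this theorem: it is stated as a recollection of results from \cite{Dyatlov:2011,Dyatlov:2012} (case 1) and \cite{Hintz:2021} (case 2), so the question is whether your sketch of how such a proof goes is sound. It is not, and the flaw lies in your division of labor. You treat the open upper half-plane and $\sigma=0$ by direct arguments claimed to be valid for \emph{all} subextremal KdS parameters, and reserve the perturbation from Schwarzschild--de~Sitter (resp.\ Kerr) for the punctured real axis only. But for any $a\neq 0$ the Killing field $\partial_t$ is spacelike in the ergoregions adjacent to both the event and the cosmological horizon, so the bulk term in your boundary-pairing identity is indefinite there: pairing $P(\sigma)u=0$ with $\bar u$ and taking imaginary parts does not produce a coercive quantity multiplying $\Im\sigma$, and the argument does not force $u\equiv 0$ when $\Im\sigma>0$. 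Superradiance is not a phenomenon confined to the real axis; the failure of energy positivity in the ergoregion obstructs the exclusion of exponentially growing modes just as much. This is precisely why mode stability for rotating black holes (Whiting's theorem for Kerr, which is the input to \cite{Hintz:2021}) is a deep result rather than an energy identity, and why the present paper describes mode stability in the full subextremal KdS range as conjectural, supported only numerically \cite{Yoshida:2010,Hatsuda:2020}; if your upper-half-plane argument were correct, it would resolve a substantial part of that open problem. The same issue undermines your treatment of $\sigma=0$: the stationary operator $P(0)$ is elliptic exactly where $\partial_t$ is timelike, hence \emph{not} elliptic in the ergoregions (which are open sets, not just the horizons), so the maximum-principle/elliptic argument for $\ker P(0)=\bC$ does not apply when $a\neq 0$.

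In the cited proofs, the perturbative mechanism you invoke only for the real axis is what handles the \emph{entire} region $\{\Im\sigma>-\alpha\}$: bounded frequencies are controlled by continuity of the meromorphic resolvent family in the black hole parameters, starting from Schwarzschild--de~Sitter in case (1) and from Kerr in case (2), and high frequencies by uniform semiclassical estimates at the trapped set; the location, simplicity, and constant mode function of the zero resonance are likewise obtained by this perturbation, not by a separate direct argument. Repairing your sketch therefore amounts to extending the perturbative step from the punctured real axis to the whole half-plane---at which point the proof reduces, exactly as in the paper, to invoking \cite{Dyatlov:2011,Dyatlov:2012} and \cite{Hintz:2021}.
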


The combination of the two results implies that for source terms $b$ which vanish for large $t_*$, the exponential decay rate of $\psi$ (in an $L^2$-sense, but via Sobolev embedding for $s>2+k$ also in the $\mathcal{C}^k$ sense, i.e.\ in a pointwise sense with up to $k$ derivatives) towards a constant is at least $\alpha-\epsilon$ for all $\epsilon>0$ where $\alpha$ is the spectral gap, i.e.\ the infimum of $-\Im\sigma$ over all non-zero quasinormal modes $\sigma\in\mathbb{C}$.\footnote{There is a technical subtlety here: the constant $\alpha_1$ for which Theorem~\ref{thm:PetersenVasy} is proved is given in terms of dynamical quantities associated with the trapped set. We are thus implicitly assuming that $\alpha_1\geq\alpha$. This is valid in the second setting described by Theorem~\ref{thm:mode-stability} and proved in the given reference. It is also valid in the Schwarzschild--de~Sitter case $a=0$. In the slowly rotating Kerr--de Sitter case, it is true as well, and follows from the validity of a \emph{full} resonance expansion with error terms having any desired amount of exponential decay, as demonstrated in \cite{Dyatlov:2012}, except in this case the remainder $\tilde\psi$ lies in $e^{-\alpha' t_*}H^{s-d}$ where $d$ depends on $\alpha'$ (in \cite{Dyatlov:2012} an estimate is stated only for $s-d=1$, but the arguments given there give the claimed stronger statement).}

If one wishes to consider the full subextremal range of KdS parameters, one needs to \emph{assume} the validity of mode stability; while this is not known rigorously, there is strong numerical support \cite{Yoshida:2010,Hatsuda:2020}.\footnote{In the present work, we only use exponential decay to constants; it is not necessary to know that the exponential decay rate is exactly given by the spectral gap. This is important since Theorem~\ref{thm:PetersenVasy} does not give this more precise (and conjecturally true, but as of yet unproven) information.}

We thus proceed under the assumption---which as mentioned above is satisfied in the settings of Theorem~\ref{thm:mode-stability}, and conjecturally in the full subextremal KdS range---that
\begin{subequations}
\begin{equation}
\label{eq:wave-decay}
  \Box_g\psi = b \in e^{-\alpha' t_*}H^{s+d} \implies \psi = c + \tilde\psi,\quad c\in\mathbb{C},\ \tilde\psi\in e^{-\alpha' t_*}H^s,
\end{equation}
for some $\alpha'>0$ (which must satisfy $\alpha'\leq\alpha$), for all sufficiently large $s$, and some fixed $d\geq 0$, on the spacetime region where $(r_1,r_2)\ni r_-<r<r_+\in(r_3,\infty)$; and
\begin{equation}
\label{eq:wave-decay-est}
  |c| + \|\tilde\psi\|_{e^{-\alpha' t_*}H^s} \leq C\|b\|_{e^{-\alpha' t_*}H^{s+d}}.
\end{equation}
\end{subequations}

For the purposes of the present paper, we only need to consider source terms $b$ whose support is a compact subset of $\mathbb{R}_{t_*}\times(r_1,r_3)\times\mathbb{S}^2$ (see Figure~\ref{fig: illustration}); after a constant shift of $t_*$, we shall thus only consider $b$ which vanish for $t_*\geq 1$. Turning attention to the black hole interior, we record that the solution $\psi$ in~\eqref{eq:wave-decay} thus satisfies $\Box_g\psi=0$ in $t_*\geq 1$ and $\psi-c\in e^{-\alpha' t_*}H^s([1,\infty)\times(r_-,r_\sharp)\times\mathbb{S}^2)$ where $r_1<r_-<r_\sharp<r_2$. The following is the main technical result of this section.

\begin{prop}
\label{prop:wave-CH}
  Let $r_1<r_\flat<r_\sharp<r_2$. Fix a smooth function $U\colon[r_1,r_2)\to\mathbb{R}$ so that the level sets of $u:=t_-+U(r)$ are spacelike. Let $u_\sharp$ denote the $u$-coordinate of the point $t_*=1,r=r_\sharp$, and define the domain $\Omega=(u_\sharp,\infty)_u\times(r_1,r_\sharp)_r\times\mathbb{S}^2$ inside the KdS spacetime. (See Figure~\ref{fig: Omega}.) Let $u_\flat>u_\sharp$. Let $\alpha'>0$, and suppose $\psi$ is a solution of $\Box_g\psi=0$ on $\Omega$ which is of the form
  \begin{equation}
  \label{eq:psi-structure}
    \psi=c+\tilde\psi,\qquad c\in\mathbb{C},\quad \tilde\psi|_{\Omega'}\in e^{-\alpha' u}H^s(\Omega')
  \end{equation}
  where $\Omega'=\Omega\cap\{r_\flat<r<r_\sharp\}$ and $s>\frac52+\frac{\alpha'}{\kappa_1}+m$, $m\in\mathbb{N}_0$. Then there exists a function
  \[
    \psi_0=\psi_0(u,\omega)\in \mathcal{C}^m((u_\sharp,u_\flat)\times\mathbb{S}^2)
  \]
  so that for all $j,k\in\mathbb{N}_0$ and $\gamma\in\mathbb{N}_0^2$ with $j+k+|\gamma|\leq m$ we have the pointwise bound\footnote{If $\frac{\alpha'}{\kappa_1}=\ell+\delta$ with $\ell\in\mathbb{N}$ and $\delta\in(0,1]$, the 0-th order Taylor expansion of $\psi(r,u,\omega)=\psi_0(u,\omega)+\ldots$ at $r=r_1$ can be improved to an $\ell$-th order expansion with a remainder term whose $j$-th $r$-derivative is of size $\mathcal{O}((r-r_1)^{\frac{\alpha'}{\kappa_1}-j})$. Since in our application it only matters that $\alpha'>0$, we content ourselves with the stated version. An analogous comment applies to Corollary~\ref{cor:wave-CH} below when $\beta>1$.}
  \begin{equation}
  \label{eq:wave-CH-est}
    |\partial_r^j\partial_u^k\partial_\omega^\gamma\bigl(\psi(r,u,\omega) - \psi_0(u,\omega)\bigr)| \leq C_{j k\gamma}(r-r_1)^{\min(\frac{\alpha'}{\kappa_1},1)-j}\|\tilde\psi|_{\Omega'}\|_{e^{-\alpha' u}H^s(\Omega')}.
  \end{equation}
\end{prop}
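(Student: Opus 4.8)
The plan is to exploit the fact that $r=r_1$ is a characteristic hypersurface for $\Box_g$, so that in the coordinates $(u,r,\omega)$ the principal part degenerates and the equation acquires a regular-singular (transport) structure in $r$ at the Cauchy horizon. Starting from the retarded ($-$) form of \eqref{eq:Boxpm}, I would use $\partial_r\Delta_r\partial_r=\Delta_r\partial_r^2+\Delta_r'\partial_r$ and commute $\partial_r$ past $L:=\chi((r^2+a^2)\partial_{t_-}+a\partial_{\varphi_-})$ to write
\[
  \rho^2\Box_g=\Delta_r\partial_r^2+(\Delta_r'-2L)\partial_r-2r\chi\,\partial_{t_-}+\Delta_{\mathrm{tan}},
\]
where $\Delta_{\mathrm{tan}}$ collects the angular operator and the terms in $\partial_{t_-},\partial_{\varphi_-}$ alone. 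Since $\Delta_r(r_1)=0$ while the equation retains the nondegenerate first-order part $(\Delta_r'-2L)\partial_r$, the surface $r=r_1$ is characteristic and the leading $r$-behavior is governed by transport along the horizon generator. Passing from $t_-$ to $u=t_-+U(r)$ only shifts $\partial_r$ by a smooth multiple of $\partial_u=\partial_{t_-}$ and leaves this structure intact; the spacelikeness of the level sets of $u$ guarantees that $\partial_u$ is a good timelike multiplier and that Fourier analysis in $u$ along a shifted contour is well adapted.

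Next I would Fourier transform in $u$. The support of $\tilde\psi$ in $u\geq u_\sharp$ together with $\tilde\psi|_{\Omega'}\in e^{-\alpha'u}H^s(\Omega')$ means that $\widehat{\tilde\psi}(\sigma,r,\omega)=\int e^{i\sigma u}\tilde\psi\,\td u$ is holomorphic in $\Im\sigma>-\alpha'$ with boundary values on the contour $\Im\sigma=-\alpha'$ lying in $L^2_{\Re\sigma}$ carrying $\langle\sigma\rangle^s$-weights (from the $u$-derivatives) combined with $H^s$ in $(r,\omega)$. Decomposing the angular dependence into the $\sigma$-dependent spheroidal harmonics $S_{\ell m}(\theta;\sigma)e^{im\varphi_-}$ reduces the equation, for each fixed $\sigma$ on the contour, to a family of ordinary differential equations in $r$ with a regular singular point at $r=r_1$. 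Their Frobenius analysis is the heart of the matter: writing $x=r-r_1$, the indicial equation is $s\bigl(\Delta_r'(r_1)\,s+2i\chi\omega_H(r_1)\bigr)=0$ with $\omega_H(r_1)=(r_1^2+a^2)\sigma-am$, so the two indicial roots are
\[
  s=0\qquad\text{and}\qquad s_-=\frac{i(\sigma-m\Omega_1)}{\kappa_1},\quad \Omega_1=\frac{a}{r_1^2+a^2},\quad \kappa_1=\frac{-\Delta_r'(r_1)}{2\chi(r_1^2+a^2)}.
\]
The crucial observation is that on the contour $\Im\sigma=-\alpha'$ one has $\Re s_-=-\Im\sigma/\kappa_1=\alpha'/\kappa_1$ for every $\Re\sigma$ and every $m$; hence the second Frobenius solution is comparable to $(r-r_1)^{\alpha'/\kappa_1}$ uniformly along the contour, while the first is analytic with value $1$ at $r=r_1$.

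I would then match to the data: since $\widehat{\tilde\psi}$ is controlled on the strip $r_\flat<r<r_\sharp$, the radial solution on $(r_1,r_\flat]$ is a combination $A_{\ell m}(\sigma)u_0+B_{\ell m}(\sigma)u_{s_-}$ with $u_0\sim1+\mathcal{O}(x)$ and $u_{s_-}\sim x^{s_-}$, and $|A_{\ell m}|,|B_{\ell m}|$ bounded, after the $\langle\sigma\rangle^s$-weighting, by the data. Reading off the limit at $r=r_1$ identifies $\widehat{\psi_0}=\sum_{\ell m}A_{\ell m}S_{\ell m}e^{im\varphi_-}$ and shows the remainder is $\mathcal{O}\bigl(x^{\min(\alpha'/\kappa_1,1)}\bigr)$, the $\min$ arising from the competition between the $x^{s_-}$-term and the $\mathcal{O}(x)$ Taylor term of $u_0$ (exactly the source of the footnoted higher-order expansion when $\alpha'/\kappa_1>1$). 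Inverting the transform in $u$ and resumming the harmonics, and using that $s>\tfrac52+\alpha'/\kappa_1+m$ supplies enough decay in $\sigma$ and in $(\ell,m)$ to apply Sobolev embedding in $(u,\omega)$, yields the pointwise $\mathcal{C}^m$ bound \eqref{eq:wave-CH-est}, with each $\partial_r$ lowering the power of $(r-r_1)$ by one because $\partial_r x^{s_-}\sim x^{s_--1}$ while $\partial_r u_0=\mathcal{O}(1)$.

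The hardest part is the uniformity in $\sigma$ and in $(\ell,m)$ of the Frobenius construction and of the connection coefficients $A_{\ell m},B_{\ell m}$: one must control the two radial solutions and their $r$-derivatives across $r_\flat<r<r_1$ uniformly for large $|\Re\sigma|$ and large $\ell$, where the radial ODE carries a large effective potential, and one must check that the measure-zero resonant set on which $s_-$ becomes a non-negative integer—forcing logarithmic solutions—contributes only harmless $\log(1/(r-r_1))$ factors absorbed into the arbitrarily small loss hidden in $\min(\alpha'/\kappa_1,1)$ and the choice $\beta'<\beta$. This uniform control, plus the bookkeeping that turns the $L^2$-in-$\sigma$ and $H^s$ estimates into the stated $\mathcal{C}^m$ bound and fixes the threshold $s>\tfrac52+\alpha'/\kappa_1+m$, is where essentially all the work lies; once in place, the expansion and the power $(r-r_1)^{\min(\alpha'/\kappa_1,1)-j}$ follow directly from the indicial roots. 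A more robust alternative avoiding separation of variables would be to prove weighted energy/commutator estimates for $(r-r_1)^{-\alpha'/\kappa_1}(\psi-\psi_0)$ directly, using the definite blue-shift sign at the Cauchy horizon together with the $e^{-\alpha'u}$ weight; this produces the same power through the same relation $\alpha'/\kappa_1$ between the decay rate and the surface gravity.
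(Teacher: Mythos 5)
Your route is genuinely different from both proofs in the paper: you propose a frequency-domain argument (Fourier--Laplace transform in $u$, expansion in spheroidal harmonics, Frobenius analysis of the radial ODE at the regular singular point $r=r_1$, contour inversion), whereas the paper's first proof is microlocal (Fredholm theory for an artificially extended operator with complex absorption, plus radial point estimates in weighted b-Sobolev spaces and Sobolev/H\"older embedding), and its second proof works in physical space (a weighted energy estimate with multiplier $-V^{-2\gamma}U^{-2\alpha'/\kappa_1}X$, commutation with the Killing fields and the Carter operator, and reduction to the transport system $X_{\rm in}X_{\rm out}\psi=\mathcal{O}(V^{1/2})\psi$). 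Your one-sentence closing alternative is essentially that second proof, but undeveloped. The structural skeleton of your main argument is correct: the indicial roots at $r=r_1$ are $0$ and $s_-=i(\sigma-m\Omega_1)/\kappa_1$, on the contour $\Im\sigma=-\alpha'$ one has $\Re s_-=\alpha'/\kappa_1$ uniformly in $\Re\sigma$ and $m$, and the competition between $x^{s_-}$ and the $\mathcal{O}(x)$ Taylor term of the analytic Frobenius solution is exactly the origin of $\min(\alpha'/\kappa_1,1)$ and of the footnoted higher-order expansion.

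There are, however, two genuine gaps. First, you apply the transform on $r_1<r<r_\flat$ without justification: hypothesis \eqref{eq:psi-structure} gives exponential decay in $u$ only on $\Omega'$, and near the Cauchy horizon $\psi$ has no a priori temperedness in $u$, so $\widehat{\psi}(\sigma,r,\cdot)$ is not known to exist there, let alone to solve the radial ODE on the shifted contour. Repairing this requires an a priori (blue-shift type) energy estimate---uniform in $u$-translations by $t_-$-invariance of the metric---showing that for each fixed $r<r_\flat$ the solution still decays like $e^{-\alpha' u}$ with constants degenerating as $r\searrow r_1$, followed by a holomorphic-continuation-plus-uniqueness argument identifying the inverse transform with $\psi$; in other words, the repair already needs the kind of energy estimate on which the paper's second proof is built. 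Second, and more seriously, the contour inversion and the resummation over $(\ell,m)$ require bounds on the two radial solutions and on the connection coefficients $A_{\ell m}(\sigma),B_{\ell m}(\sigma)$ that are uniform as $|\Re\sigma|,\ell\to\infty$. You correctly flag this as ``where essentially all the work lies,'' but it cannot be deferred as a technicality: it is the entire analytic content of the argument, and it amounts to the interior frequency analysis/scattering theory of the type carried out in \cite{Kehle:2019}, which the paper explicitly points out is \emph{more} delicate than what Proposition~\ref{prop:wave-CH} requires. As written, the proposal therefore reduces the proposition to a strictly harder, unproven input, rather than proving it. Minor further points: for complex $\sigma$ the angular operator is not self-adjoint, so completeness and analyticity of the spheroidal expansion needs justification; $\partial_u=\partial_{t_-}$ is in fact spacelike in the interior region, so it is not a ``timelike multiplier'' (harmless, since nothing you do uses this); and the resonant cases $\alpha'/\kappa_1\in\mathbb{N}$, where logarithms appear, are best handled by slightly decreasing $\alpha'$ rather than by absorbing $\log$ factors into the exponent in \eqref{eq:wave-CH-est}.
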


We remark that Proposition~\ref{prop:wave-CH} is different from (and less delicate than) scattering theory from the event horizon (rather than from a hypersurface in the black hole interior) to the Cauchy horizon in the black hole interior, as studied for example in \cite{Kehle:2019}. Proposition~\ref{prop:wave-CH} has the following immediate consequence, which will be used as a black box in Section~\ref{sec:bound}:

\begin{cor}[Pointwise bounds near the Cauchy horizon of KdS]
\label{cor:wave-CH}
  Set $\beta'=\frac{\alpha'}{\kappa_1}$ (which satisfies $0<\beta'\leq\beta=\frac{\alpha}{\kappa_1}$). Fix $T_0<T_1$, $r_\sharp\in(r_1,r_2)$, $r_+\in(r_3,\infty)$, and $m\in\mathbb{N}_0$. Fix further $u_\sharp<u_\flat$ in the notation of Proposition~\ref{prop:wave-CH}. Then there exist $m'\in\mathbb{N}$ and a constant $C$ so that for all $b\in \mathcal{C}^{m'}(\mathbb{R}_{t_*}\times(r_1,\infty)\times\mathbb{S}^2)$ with support in $\{T_0\leq t_*\leq T_1,\ r_\sharp\leq r\leq r_+\}$, the retarded solution of $\Box_g\psi=b$ satisfies
  \begin{align}
    |\partial_u^k\partial_\omega^\gamma\psi(r,u,\omega)| &\leq C_{k\gamma}\|b\|_{\mathcal{C}^{m'}}, \\
    |\partial_r^j\partial_u^k\partial_\omega^\gamma\psi(r,u,\omega)| &\leq C_{j k\gamma \epsilon}(r-r_1)^{\min(\beta',1)-\epsilon-j}\|b\|_{\mathcal{C}^{m'}}
  \end{align}
  in the region $\{u_\sharp<u<u_\flat,\ r_1<r<r_\sharp\}$ for all $\epsilon>0$ and for all $j,k,\gamma$ with $j+k+|\gamma|\leq m$. 
\end{cor}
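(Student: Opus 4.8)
The plan is to derive Corollary~\ref{cor:wave-CH} by feeding the global decay statement \eqref{eq:wave-decay}--\eqref{eq:wave-decay-est} into Proposition~\ref{prop:wave-CH}; since the corollary is advertised as an immediate consequence, the work is entirely in matching hypotheses. There are four steps: (i) convert the $\mathcal{C}^{m'}$ bound on the compactly supported source $b$ into a bound on $\|b\|_{e^{-\alpha' t_*}H^{s+d}}$; (ii) apply the decay hypothesis to write the retarded solution as $\psi=c+\tilde\psi$ with quantitative control of $|c|$ and $\|\tilde\psi\|_{e^{-\alpha' t_*}H^s}$; (iii) replace the $t_*$-weighted norm by the $u$-weighted norm appearing in \eqref{eq:psi-structure} on a region bounded away from $r_1$; and (iv) invoke Proposition~\ref{prop:wave-CH} and read off the two bounds from \eqref{eq:wave-CH-est}.

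For (i), fix an integer $s>\tfrac52+\beta'+m$ as demanded by Proposition~\ref{prop:wave-CH}, and choose an integer $m'\ge s+d$, with $d$ the fixed derivative loss in \eqref{eq:wave-decay}; this fixes $m'$ as a function of $m$. After a constant shift of $t_*$ of size controlled by the fixed $T_0,T_1$ — needed so that $\supp b\subset\{t_*\ge 0\}$, consistent with the space $e^{-\alpha' t_*}H^{s+d}$ — the weight $e^{\alpha' t_*}$ is bounded on the compact set $\supp b$, every derivative of $b$ up to order $m'\ge s+d$ is pointwise dominated by $\|b\|_{\mathcal{C}^{m'}}$, and $\supp b$ has finite volume; hence $\|b\|_{e^{-\alpha' t_*}H^{s+d}}\le C\|b\|_{\mathcal{C}^{m'}}$. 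For (ii), choosing $r_-\in(r_1,r_\flat)$ and the given $r_+$, \eqref{eq:wave-decay}--\eqref{eq:wave-decay-est} produce $\psi=c+\tilde\psi$ on $\{r_-<r<r_+\}$ with $|c|+\|\tilde\psi\|_{e^{-\alpha' t_*}H^s}\le C\|b\|_{\mathcal{C}^{m'}}$. Since $\supp b\subset\{r\ge r_\sharp\}$ we have $\Box_g\psi=0$ on the domain $\Omega$ of Proposition~\ref{prop:wave-CH}, so $\psi$ is an admissible solution there and, setting $\tilde\psi:=\psi-c$ on all of $\Omega$, it has the form \eqref{eq:psi-structure}.

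Step (iii) is the one computation worth doing. On the compact radial interval $[r_\flat,r_\sharp]\subset(r_1,r_2)$ the coordinate $t_*$ differs from $t_-$, and therefore from $u=t_-+U(r)$, by a smooth function of $r$ alone (by \eqref{eq:tpm}); thus $t_*-u$ is a bounded smooth function of $r$ on $\Omega'=\Omega\cap\{r_\flat<r<r_\sharp\}$ and the map $(t_*,r,\omega)\mapsto(u,r,\omega)$ is a diffeomorphism with bounded derivatives. Consequently $e^{-\alpha' t_*}\simeq e^{-\alpha' u}$ and the weighted $H^s$ norms are equivalent on $\Omega'$, so $\tilde\psi|_{\Omega'}\in e^{-\alpha' u}H^s(\Omega')$ with $\|\tilde\psi|_{\Omega'}\|_{e^{-\alpha' u}H^s(\Omega')}\le C\|b\|_{\mathcal{C}^{m'}}$. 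This is exactly the hypothesis \eqref{eq:psi-structure}, and Proposition~\ref{prop:wave-CH} yields $\psi_0\in\mathcal{C}^m((u_\sharp,u_\flat)\times\mathbb{S}^2)$ with the right-hand side of \eqref{eq:wave-CH-est} now bounded by $C_{jk\gamma}(r-r_1)^{\min(\beta',1)-j}\|b\|_{\mathcal{C}^{m'}}$.

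Both corollary bounds now follow. For the second ($j\ge1$), $\psi_0=\psi_0(u,\omega)$ is $r$-independent, so $\partial_r^j\partial_u^k\partial_\omega^\gamma\psi=\partial_r^j\partial_u^k\partial_\omega^\gamma(\psi-\psi_0)$; since $r-r_1$ is bounded on the region, the sharp exponent $\min(\beta',1)-j$ of \eqref{eq:wave-CH-est} may be relaxed to $\min(\beta',1)-\epsilon-j$ at the cost of the constant $\sup_{r<r_\sharp}(r-r_1)^\epsilon$, giving the asserted inequality for every $\epsilon>0$. The first bound ($j=0$) is boundedness of $\psi$ and its $(u,\omega)$-derivatives up to $r_1$, and here one must control $\psi_0$ itself, which \eqref{eq:wave-CH-est} does not supply directly; I would pin it down at a fixed interior radius $r_0\in(r_\flat,r_\sharp)$, where (since $s>\tfrac52+m$) the trace of $\tilde\psi$ to $\{r=r_0\}$ and Sobolev embedding give $|\partial_u^k\partial_\omega^\gamma\tilde\psi(r_0,u,\omega)|\le Ce^{-\alpha' u}\|\tilde\psi\|\le C\|b\|_{\mathcal{C}^{m'}}$ on the bounded $u$-interval, whence (adding $c$, which drops out for $k+|\gamma|\ge1$) the same bound holds for $\partial_u^k\partial_\omega^\gamma\psi(r_0,\cdot)$; combining with the $j=0$ case of \eqref{eq:wave-CH-est} at $r_0$ bounds $\psi_0$ and hence, via \eqref{eq:wave-CH-est} once more, bounds $\partial_u^k\partial_\omega^\gamma\psi$ throughout $\{u_\sharp<u<u_\flat,\ r_1<r<r_\sharp\}$. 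The only real friction in the argument is thus bookkeeping — keeping $s,d,m'$ and $r_-<r_\flat<r_\sharp$ mutually consistent, checking the $t_*\leftrightarrow u$ norm equivalence, and separately controlling $\psi_0$ for the $j=0$ estimate; the substance is entirely contained in Proposition~\ref{prop:wave-CH}.
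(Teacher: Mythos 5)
Your proposal is correct and follows exactly the route the paper intends: the paper states Corollary~\ref{cor:wave-CH} as an ``immediate consequence'' of Proposition~\ref{prop:wave-CH} together with the standing assumption \eqref{eq:wave-decay}--\eqref{eq:wave-decay-est}, and your four steps (compact support gives $\|b\|_{e^{-\alpha' t_*}H^{s+d}}\lesssim\|b\|_{\mathcal{C}^{m'}}$, the decay hypothesis gives $\psi=c+\tilde\psi$, the $t_*\leftrightarrow u$ weight equivalence on the compact radial interval $[r_\flat,r_\sharp]$, then \eqref{eq:wave-CH-est}) are precisely that derivation spelled out. Your extra care in quantitatively pinning down $\psi_0$ at an interior radius via Sobolev embedding---the one genuinely non-trivial point, since \eqref{eq:wave-CH-est} alone does not bound $\psi_0$ by $\|b\|_{\mathcal{C}^{m'}}$---is correct and in fact yields the slightly stronger exponent $\min(\beta',1)-j$ without the $\epsilon$ loss.
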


\begin{proof}[Proof of Proposition~\ref{prop:wave-CH}]
  By subtracting from $\psi$ the constant $c$ (which solves the wave equation), we may assume that $\psi=\tilde\psi$. We sketch two different proofs of the estimate~\eqref{eq:wave-CH-est}; the first one closely follows arguments from \cite{Hintz:2015}, whereas the second one is more direct and can in principle be extended to produce more precise asymptotic expansions at $r=r_1$, given asymptotic expansions (resonance expansions, or even expansions into powers of $t_*$ as in Price's law on Kerr) of $\psi$ in $r_\flat\leq r\leq r_\sharp$.
  
  \textbf{First proof.} We can adapt the methods of \cite{Hintz:2015}; in fact, the following arguments ultimately give a simpler proof of the main results of \cite{Hintz:2015}, in that the structure of spacetime in the region $r\geq r_2$, or indeed $r\geq r_\sharp\in(r_1,r_2)$, plays no role, \emph{given the a priori assumption}~\eqref{eq:psi-structure} on the structure of the solution of the wave equation.\footnote{This a priori assumption in turn follows from analysis in a neighborhood of $\{r_2\leq r\leq r_3\}$, cf.\ Theorems~\ref{thm:PetersenVasy} and \ref{thm:mode-stability} above, which can be done completely independently of the analysis in the black hole interior.} To wit, we work in the region $r\leq r_\sharp$ and as in the reference consider the wave equation on an artificial extension of the KdS spacetime to $r<r_1$ which features yet another artificial horizon at some value $r=r_0<r_1$; and we place a time-translation-invariant complex absorbing operator  $\mathcal{Q}$ in the region $r<r_1$. In the notation of \cite{Hintz:2015}, we thus work in
  \[
    \Omega_{\rm ext} := [0,\infty)_{t^*}\times[r_0-2\delta,r_\sharp]\times\mathbb{S}^2
  \]
  (where we write $t^*$ for the function denoted $t_*$ in~\cite{Hintz:2015} to distinguish it from the time function $t_*$); and for $r_1\leq r<r_2$, the difference $t^*-t_-$ is a smooth function of $r$, and thus so is $t^*-u$.
  
  Let $\zeta=\zeta(r)$ denote a smooth function which equals $1$ near $(-\infty,r_\flat]$ and $0$ near $[r_\sharp,\infty)$, then $\zeta\psi$ satisfies the equation
  \[
    \Box_g(\zeta\psi)=b':=[\Box_g,\zeta]\psi\in e^{-\alpha' t_*}H^{s-1}(\Omega_{\rm ext})
  \]
  in $r>r_1$, with $b$ vanishing outside $\{r_\flat\leq r\leq r_\sharp\}$. Following the strategy of~\cite{Hintz:2015}, by uniqueness of retarded solutions in $r>r_1$, $\zeta\psi$ is also equal to the restriction to $r>r_1$ of the solution $\psi_{\rm ext}$ of the extended wave equation (with complex absorption) $\mathcal{P}_{\rm ext}\psi_{\rm ext}=b'$, where $\mathcal{P}_{\rm ext}=\Box_g-i\mathcal{Q}$, on $\Omega_{\rm ext}$ with vanishing Cauchy data at $r=r_0-2\delta$ and $r=r_\sharp$.
  
  Working on Sobolev spaces, with exponential weights in $t^*$, of functions on $\Omega_{\rm ext}$ which have supported character at (i.e.\ vanish beyond) $r=r_0-2\delta$ and $r=r_\sharp$, one can then prove the Fredholm property of $\mathcal{P}_{\rm ext}\colon\{\psi_{\rm ext}\in e^{-\alpha't^*}H^s(\Omega_{\rm ext})\colon \mathcal{P}_{\rm ext}\psi_{\rm ext}\in e^{-\alpha't^*}H^{s-1}(\Omega_{\rm ext})\}\to e^{-\alpha't^*}H^{s-1}(\Omega_{\rm ext})$, where $s=s(r)$ is now a suitable variable order function subject to the bound $s<\frac12+\frac{\alpha'}{\kappa_1}$ at the Cauchy horizon $r=r_1$, by following the arguments in \cite{Hintz:2015}. The first simplification afforded by working in $r\leq r_\sharp<r_2$ is that there is no trapping in $\Omega_{\rm ext}$, which is why $\alpha'$ can indeed be taken to be arbitrary (in particular, positive) here. The second simplification is that $\mathcal{P}_{\rm ext}$ does not have any mode solutions, subject to the vanishing condition in $r>r_\sharp$, which are non-zero in $r>r_1$; this follows from domain of dependence considerations in $\{r_1<r<r_\sharp\}$ (contained in region II in Figure~\ref{fig: Penrose diagram full}).

  Having thus recovered $\zeta\psi$ as the restriction $\psi_{\rm ext}|_{r>r_1}$ of the solution of the extended equation $\mathcal{P}_{\rm ext}\psi_{\rm ext}=b'$, we can apply the radial point estimates of \cite[Proposition~2.23]{Hintz:2015} to the extended equation and deduce, upon restriction to $r>r_1$ and the region $u_\sharp<u<u_\flat$ (where the weights in $t^*\sim u\sim 1$ are irrelevant) that
  \[
    X_1\cdots X_j\psi\in H^{\frac{1}{2}+\frac{\alpha'}{\kappa_1}-\epsilon}([r_1,r_\sharp)\times(u_\sharp,u_\flat)\times\mathbb{S}^2)
  \]
  for all $j\leq m+2$ where each $X_i$ is one of the vector fields $\partial_u$, $(r-r_1)\partial_r$, $\partial_\omega$ (spherical vector fields). (We use here that $s>\frac{1}{2}+\frac{\alpha'}{\kappa_1}+m+2$.)
  
  Considering for small $\epsilon>0$ the quantity $\beta''=\min(\frac{\alpha'}{\kappa_1},1)-\epsilon\in(0,1)$, we proceed to analyze this condition, which implies
  \[
    X_1\cdots X_j\psi\in H^{\frac12+\beta''}([0,v_0)_V\times A),\qquad V:=r-r_1,\ v_0:=r_\sharp-r_1,\quad A:=(u_\sharp,u_\flat)\times\mathbb{S}^2 \subset \mathbb{R}^3,
  \]
  $j\leq m+2$, where $X_i=\partial_u,V\partial_V,\partial_\omega$. Using two derivatives along $\partial_u,\partial_\omega$, Sobolev embedding on $A$ implies $X_1\cdots X_j\psi \in \mathcal{C}^0(A;H^{\frac12+\beta''}([0,v_0)))$ for $j\leq m$, and therefore
  \[
    \psi \in \bigcap_{j=0}^m \mathcal{C}^{m-j}\bigl(A;H^{\frac12+\beta'';j}([0,v_0)\bigr),
  \]
  where we write $H^{s;j}([0,v_0))$ for the space of all $u\in H^s([0,v_0))$ so that $(V\partial_V)^i u\in H^s([0,v_0))$ for all $i\leq j$. Now, every $u\in H^{\frac12+\beta''}([0,v_0))\subset \mathcal{C}^0([0,v_0))$ has a well-defined value $u(0)$ at $V=0$. To complete the proof of the estimate~\eqref{eq:wave-CH-est}, it thus suffices to show the following 1-dimensional result (with $\beta''\in(0,1)$ and $j\in\mathbb{N}_0$):
  \[
    u \in H^{\frac12+\beta'';j}([0,v_0)) \implies |\partial_V^i(u-u(0))| \leq C_i V^{\beta''-i},\qquad 0\leq i\leq j.
  \]
  By definition of the space $H^{\frac12+\beta'';j}$, it suffices to prove this in the case $j=0$. But this follows directly from Sobolev embedding, which states that $H^{\frac12+\beta''}([0,v_0))\subset \mathcal{C}^{0,\beta''}([0,v_0))$ (H\"older space).
  
  \textbf{Second proof.} For technical simplicity, we will not operate at a sharp level of Sobolev regularity. Multiply $\rho^2\Box$ in~\eqref{eq:Boxpm} (with the `$-$' sign) by $\Delta_r$, and notice that in terms of $V:=r-r_1$ we have $\Delta_r(r)=-|\Delta_r'(r_1)|V+\mathcal{O}(V)^2$ near $V=0$, so $\Delta_r\partial_r=(-|\Delta'(r_1)|+\mathcal{O}(V))V\partial_V$. Therefore, setting $\Omega_1:=\frac{a}{r_1^2+a^2}$,
  \begin{align*}
    L := \frac{\Delta_r}{|\Delta_r'(r_1)|^2}\rho^2\Box_g &\equiv (V\partial_V)^2 + \frac{2\chi(r_1^2+a^2)}{|\Delta_r'(r_1)|}(\partial_{t_-}+\Omega_1\partial_{\varphi_-})V\partial_V \\
      &\qquad - \frac{\chi^2 V}{|\Delta_r'(r_1)|\Delta_\theta\sin^2\theta}(a\sin^2\theta\,\partial_{t_-}+\partial_{\varphi_-})^2 - \frac{V}{|\Delta_r'(r_1)|\sin\theta}\partial_\theta\Delta_\theta\sin\theta\,\partial_\theta
  \end{align*}
  modulo terms whose coefficients have at least one additional factor of $V$. Another way of viewing $L$ is that it is a differential operator constructed out of the vector fields $V\partial_V$, $\partial_{t_-}+\Omega_1\partial_{\varphi_-}$, and $V^{\frac12}\partial_\omega$ (weighted spherical derivatives) with coefficients that are smooth functions of $V^{\frac12}$ and $\mathbb{S}^2$, and which are independent of $t_-,\varphi_-$.
  
  It is convenient to rewrite this further. Recognizing that $\frac{|\Delta'_r(r_1)|}{2\chi(r_1^2+a^2)}=\kappa_1$ is the surface gravity, let us introduce the new variable
  \[
    U := e^{-\kappa_1 t_-}.
  \]
  Then $L$ is constructed out of the vector fields $V\partial_V$, $U\partial_U-\frac{\Omega_1}{\kappa_1}\partial_{\varphi_-}$, $V^{\frac12}\partial_\omega$ in the above sense, and to leading order (at $V=0$) equal to
  \begin{equation}
  \label{eq:Ledgeb}
    L \equiv \Bigl(V\partial_V - U\partial_U + \frac{\Omega_1}{\kappa_1}\partial_{\varphi_-}\Bigr) V\partial_V - \frac{\chi^2}{|\Delta_r'(r_1)|\Delta_\theta\sin^2\theta}(V^{\frac12}\partial_{\varphi_-})^2 - \frac{1}{|\Delta_r'(r_1)|\sin\theta}V^{\frac12}\partial_\theta \Delta_\theta\sin\theta\,V^{\frac12}\partial_\theta;
  \end{equation}
  and indeed its principal part is a Lorentzian signature quadratic form in these vector fields. Using an energy estimate near $V=0$, with vector field multiplier $-V^{-2\gamma}U^{-\frac{2\alpha'}{\kappa_1}}X$ (where $X$ is a future timelike linear combination of $V\partial_V-U\partial_U+\frac{\Omega_1}{\kappa_1}\partial_{\varphi_-}$ and $V\partial_V$) for a suitable weight $\gamma$ (sufficiently negative), one can thus bound the $L^2$-norms of $V\partial_V\psi$, $(U\partial_U-\frac{\Omega_1}{\kappa_1}\partial_{\varphi_-})\psi$, $V^{\frac12}\partial_\omega\psi$, and by $V$-integration also $\psi$ itself, on the full domain $\Omega$ in the space $U^{\frac{\alpha'}{\kappa_1}}V^\gamma L^2(\Omega)=e^{-\alpha' t_-}(r-r_1)^{-|\gamma|}L^2(\Omega)$ by $\|\psi\|_{e^{-\alpha' u}H^1(\Omega')}$.
  
  Since $\rho^2\Box_g\psi=0$, we also have $\rho^2\Box_g(A\psi)=0$ for all operators $A$ which commute with $\rho^2\Box_g$; such $A$ are (finite products of) $\partial_{t_-}$, $\partial_{\varphi_-}$, and the Carter operator $\mathcal{C}:=\frac{\chi^2}{\Delta_\theta\sin^2\theta}(a\sin^2\theta\,\partial_{t_-}+\partial_{\varphi_-})^2+\frac{1}{\sin\theta}\partial_\theta\Delta_\theta\sin\theta\,\partial_\theta$. The aforementioned energy estimate thus allows us to bound $A\psi$, for all such $A$, in the same weighted $L^2$-space.\footnote{A conceptually cleaner but technically considerably more involved procedure which avoids the usage of the subtle Carter operator is as follows: first, one proves higher regularity with respect to the above vector fields, and then commutes the equation $L\psi=0$ with $V\partial_V$, $U\partial_U-\frac{\Omega_1}{\kappa_1}\partial_{\varphi_-}$, and $\partial_\omega$. See \cite{Hintz:2023} for such a strategy where the role of the Cauchy horizon is played by null infinity, and the commutator vector fields are called `commutator b-vector fields'.} Since by elliptic regularity this in particular controls spherical derivatives of $\psi$, the terms in~\eqref{eq:Ledgeb} involving $V^{\frac12}\partial_{\varphi_-}$ and $V^{\frac12}\partial_\theta$ can now, due to the presence of $V^{\frac12}$, be considered to be of lower order near $V=0$. Thus, we can reduce the equation satisfied by $\psi$ to
  \begin{equation}
  \label{eq:transport}
    L_0\psi := X_{\rm in}X_{\rm out}\psi = \mathcal{O}(V^{\frac12})\psi, \qquad 
      X_{\rm in} = U\partial_U - \frac{\Omega_1}{\kappa_1}\partial_{\varphi_-} - V\partial_V, \quad X_{\rm out}=-V\partial_V.
  \end{equation}
  See Figure~\ref{fig:transport}.
  
  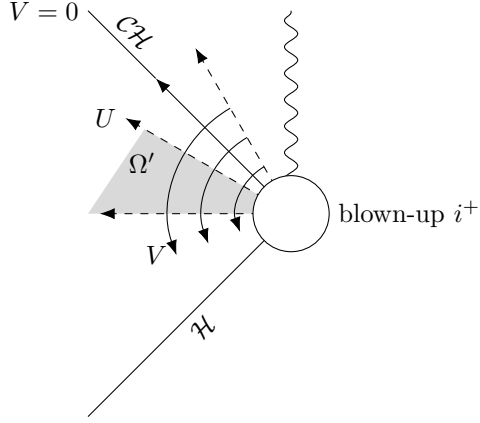
\begin{figure}
    \centering
    \begin{tikzpicture}[scale=0.9]
      % ring/central singularity
      \draw[snake it] (0,0) -- (0,3);

      % domain \Omega'
      \path[fill=gray!30](0,0)--({2.5*cos(150)},{2.5*sin(150)})--({3*cos(180)},{3*sin(180)})--(0,0);
      \draw (-2.2,0.75) node{$\Omega'$};

      % \partial_U arrows
      \draw [-{Latex[length=2mm]}] (0,0)--(-2,2);
      \draw [dashed,-{Latex[length=2mm]}](0,0)--({sqrt(8)*cos(120)},{sqrt(8)*sin(120)});
      \draw [dashed,-{Latex[length=2mm]}](0,0)--({sqrt(8)*cos(150)},{sqrt(8)*sin(150)}) node[left]{$U$};
      \draw [dashed,-{Latex[length=2mm]}](0,0)--({sqrt(8)*cos(180)},{sqrt(8)*sin(180)});

      % \partial_V arrows
      \draw [-{Latex[length=2mm]}] ({0.8*cos(120)},{0.8*sin(120)}) arc (120:200:0.8);
      \draw [-{Latex[length=2mm]}] ({1.3*cos(120)},{1.3*sin(120)}) arc (120:200:1.3);
      \draw [-{Latex[length=2mm]}] ({1.8*cos(120)},{1.8*sin(120)}) arc (120:200:1.8) node[left]{$V$};

      % Cauchy horizon label
      \draw (-2,2)--(-3,3) node[midway,above,sloped]{$\mathcal{CH}$} node[left]{$V=0$};

      % event horizon label
      \draw (0,0)--(-3,-3) node[midway,below,sloped]{$\mathcal{H}$};

      % blown-up i^+
      \draw[fill=white] (0,0) circle (16pt);
      \draw (16pt,0) node[right]{blown-up $i^+$};
    \end{tikzpicture}
    \caption{Illustration of the coordinates $U,V$, and the domain $\Omega'$. Upon blowing up $i^+$, the radial coordinate $V=r-r_1$ becomes a smooth local coordinate function down to blown-up $i^+$ (where $U=0$ in the region $r<r_2$ of validity of the coordinates $U,V$).}
    \label{fig:transport}
  \end{figure}
  
  This is a concatenation of two transport equations, which we can integrate up from a level set of $V$ in the region $\Omega'$; the following are the two main observations.
  \begin{enumerate}
  \item Integration of $X_{\rm in}$ transports decay/asymptotics of $f:=X_{\rm out}\psi$ on a hypersurface $V=V_0>0$ as $U=e^{-\kappa_1 t_-}\searrow 0$ to decay/asymptotics at the Cauchy horizon, i.e.\ as $r-r_1\searrow 0$. For example, the solution of $X_{\rm in}f(U,V,\theta,\varphi_-)=0$ with initial condition $f(U,V_0,\theta,\varphi_-)=f_0(U,\theta,\varphi_-)$ is given by
    \[
      f(U,V,\theta,\varphi_-) = f_0\Bigl(U V/V_0,\theta,\varphi_- - \frac{\Omega_1}{\kappa_1}\log(V/V_0)\Bigr).
    \]
    Note that $f_0\sim U^{\frac{\alpha'}{\kappa_1}}=e^{-\alpha' t_-}$ implies that for fixed $U=U_0$ one has $f\sim V^{\frac{\alpha'}{\kappa_1}}=(r-r_1)^{\frac{\alpha'}{\kappa_1}}$.\footnote{Likewise, for $f_0\sim(\log\frac{1}{U})^{-m}\sim t_-^{-m}$ one has $f\sim|\log(r-r_1)|^{-\frac{\alpha'}{\kappa_1}m}$ as $r\searrow r_1$ for fixed $U=U_0$. Upon integrating this along $X_{\rm in}=-(r-r_1)\partial_r$, this is a concrete manifestation of the logarithmic regularity at the Cauchy horizon on Kerr spacetimes discussed in \cite{Hintz:2017}.}
  \item Subsequent integration of $X_{\rm in}\psi=-V\partial_V\psi=f=\mathcal{O}(V^{\frac{\alpha'}{\kappa_1}})$ from $V=V_0>0$ towards the Cauchy horizon at $V=0$ produces $\psi(U,V,\theta,\varphi_-)=\psi_0(U,\theta,\varphi_-)+\psi_1(U,V,\theta,\varphi_-)$ where $\psi_1=\mathcal{O}(V^{\frac{\alpha'}{\kappa_1}})$.
  \end{enumerate}
  For a rigorous proof of~\eqref{eq:wave-CH-est}, one uses equation~\eqref{eq:transport} to improve control on bounds or asymptotic behavior of $\psi$ near $V=0$ by half a power of $V$ by controlling $L^2$- (or pointwise) norms of the integrations of $X_{\rm in},X_{\rm out}$.
\end{proof}

For the Klein--Gordon equation on subextremal RNdS or KdS spacetimes, the same pointwise bounds hold under the assumption of mode stability. In the RNdS setting, mode stability can be checked using separation of variables and a Wronskian argument for the radial ODE (see \cite[Section 1.5]{ShlapentokhRothman:2015} for the case of massless fields in the Schwarzschild case, with a scalar field mass being easily incorporated), and a full resonance expansion can be obtained using the techniques of \cite{Bony:2008,Dyatlov:2012}; in the slowly rotating KdS setting and for small scalar field masses, this is mentioned in \cite{Dyatlov:2011} and follows from \cite[Lemma~3.5]{HintzVasy:2015}, and in the small mass KdS setting it is proved in \cite{Hintz:2021}. The proof of Proposition~\ref{prop:wave-CH} goes through with only notational changes, and thus also Corollary~\ref{cor:wave-CH} holds in these settings.

We do not discuss charged scalar fields in this paper. It is known that for some values of the black hole and scalar field parameters they do not satisfy mode stability \cite{Besset:2021}.

\section{Bounding the state-dependence}
\label{sec:bound}
In the following, we focus our attention on the stress-energy tensor of the free scalar field, which is the most relevant observable of the quantum field for semi-classical gravity. We take the spacetime $\mathcal{M}$ to be a subextremal physical RNdS or KdS spacetime as described in Section \ref{sec:setup}, and we consider a scalar quantum field theory described by the CCR-algebra $\mathcal{A}(\mathcal{M})$ on this spacetime. The corresponding stress-energy tensor of the classical scalar field $\phi(x)$ is given by
\begin{align}
    T_{\mu\nu}(x)=&(1-2\xi)\partial_\mu\phi(x)\partial_\nu\phi(x)+\xi\left(R_{\mu\nu}\phi(x)^2-2\xi \phi(x)\nabla_\mu\nabla_\nu\phi(x)\right)\\\nonumber
    &-\frac{1}{2}g_{\mu\nu}\left((1-4\xi)\partial_\sigma\phi(x) \partial^\sigma\phi(x)-4\xi \phi(x)\nabla_\sigma \nabla^\sigma\phi(x)+(m^2+\xi R)\phi(x)^2\right)\, .
\end{align}
Since this is local and quadratic in the field $\phi$, the corresponding observable of the quantum field requires renormalization to be well-defined. Let us assume that the quantum stress-energy tensor is renormalized locally and covariantly utilizing Hadamard point-split renormalization \cite{Hollands:2001}, and that the renormalization ambiguities have been fixed in some way. Let us denote the resulting observable by $T^{\text{ren}}_{\mu\nu}(x)$. It is a special case of a finite sum of Wick squares of differentiated fields, with derivatives up to second order, as discussed at the end of Section~\ref{subsec: scalar field}. Therefore, if $\omega$ is a Hadamard state on $\mathcal{A}(\mathcal{M})$, then the expectation value $\omega(T^{\text{ren}}_{\mu\nu}(x))$ will be well-defined and finite for all $x\in\mathcal{M}$, but will in general diverge at the boundaries of $\mathcal{M}$ considered as a submanifold of its maximal analytic extension. We are interested in the divergence at the ingoing Cauchy horizon $\mCH$.

To study this divergence, let us fix a reference Hadamard state $\omega_0$ on $\mathcal{A}(\mathcal{M})$. One possible choice that has been used in the literature is the Unruh state \cite{Unruh:1976, Dappiaggi:2009, Brum:2014, Hollands:2019, Gerard:2020, Klein:2022}, but one could also make a different choice.
The expectation values of the components $T_{\mu\nu}^{\text{ren}}(x)$ of the renormalized stress-energy tensor in some coordinate system that is regular across $\mCH$ in the state $\omega$ can then be written as
\begin{align}
\label{eq: state dep  nsp splitting}
    \omega(T_{\mu\nu}^{\text{ren}}(x))&=\omega_0(T_{\mu\nu}^{\text{ren}}(x))+\omega(T_{\mu\nu}^{\text{ren}}(x))-\omega_0(T_{\mu\nu}^{\text{ren}}(x))\\\nonumber
   &=\omega_0(T_{\mu\nu}^{\text{ren}}(x))+\lim\limits_{x^\prime \to x}D_{\mu\nu}(x,x^\prime)W[\omega,\omega_0](x,x^\prime)\,  
\end{align}
 for $x\in \mathcal{M}$.
In the last step, we have used \eqref{eq:point-split} to rewrite the difference of expectation values. The differential operator $D_{\mu\nu}(x,x^\prime)$ can be written as
\begin{align}
    D_{\mu\nu}(x,x^\prime)=& (1-2\xi)g_{(\nu}^{\nu^\prime}(x,x^\prime)\partial_{\mu)} \partial_{\nu^\prime}+\xi\left(R_{\mu\nu}(x)-2\nabla_\mu\nabla_\nu\right)\\\nonumber
    &-\frac{1}{2}g_{\mu\nu}(x)\left((1-4\xi)g^{\sigma\rho}(x) g_{\rho}^{\rho^\prime}(x,x^\prime)\partial_\sigma \partial_{\rho^\prime}+(m^2+\xi R)-4\xi \nabla_\sigma \nabla^\sigma \right)
    \, ,
\end{align}
where $g_{\alpha}^{\beta^\prime}(x,x^\prime)$ is the bi-tensor of parallel transport, round brackets around indices indicate a symmetrization, and (un)primed derivatives act on the (un)primed variable.

The first term in \eqref{eq: state dep  nsp splitting} is independent of  the state $\omega$, and depends only on the reference state $\omega_0$. This is the state-independent part discussed before. It can be computed numerically, and indeed its quadratic leading divergence in $(r-r_1)$ at the Cauchy horizon has been found to be non-vanishing on RNdS \cite{Hollands:2019}, as well as on Reissner--Nordstr\"om \cite{Zilberman:2019} and Kerr \cite{Zilberman:2022}. The numerical results on RNdS, Kerr and Reissner--Nordstr\"om also indicate a smooth dependence of the coefficient of the $(r-r_1)^{-2}$-divergence on the spacetime parameters and the parameters of the scalar field. Moreover, first numerical results on KdS \cite{CHKS:2023} indicate that the coefficient of the $(r-r_1)^{-2}$-divergence is indeed generically nonvanishing.

The remainder of this section will be devoted to bounding the potential divergence of the state-dependent contribution of the second term in \eqref{eq: state dep  nsp splitting},
\begin{align}
   t_{\mu\nu}^\omega(x)=\lim\limits_{x^\prime \to x}D_{\mu\nu}(x,x^\prime)W[\omega,\omega_0](x,x^\prime)\, ,
\end{align}
as $x$ approaches the Cauchy horizon. In fact, one can show the following more general result:
\begin{prop}
\label{prop:t mu nu bound}
Let $x\in\mCH$ be a point on the Cauchy horizon of RNdS or KdS. Let $\mathcal{U}$ be a small open neighbourhood of $x$ with compact closure in the analytic extension of $\mathcal{M}$ and contained in the coordinate chart $(V,y^{i})$. Here, the coordinates $(y^{i})=(t_-,\theta,\varphi_-)$ parameterize $\mCH$, and $V=r-r_1$ is chosen so that $\mathcal{U}_{\mathcal{M}}:=\mathcal{U}\cap\mathcal{M}=\mathcal{U}\cap\{V>0\}$.
 Assume that the spectral gap $\alpha$ of quasinormal modes is strictly positive. Set $\beta=\alpha/\kappa_1$, and let $\mathcal{D}_{j}$, $j\in\{1,2\}$ be (not necessarily scalar) differential operators of order $m_j$, so that $m_1+m_2\leq2$, and with coefficients that are smooth on $\overline{\mathcal{M}\cup \mathcal{U}}$. Let $\omega_j$ be Hadamard states on $\mathcal{M}$, and set
 \begin{align}
   A[\omega_1,\omega_2](x)=\lim\limits_{x^\prime\to x}\left(g(x,x^\prime) \mathcal{D}_1(x) \mathcal{D}_2(x^\prime) W[\omega_1,\omega_2](x,x^\prime)\right)\, ,
 \end{align}
 where, as in \eqref{eq:point-split}, $g(x,x^\prime)$ is the proper power of the parallel transport bi-tensor, so that $A[\omega_1,\omega_2](z)$ is a $(k,l)$-tensor at $z$ for some $k,l\in \bN_0$ and all $z\in\mathcal{M}$.
 Then the tensor components $A[\omega_1,\omega_2]^{\mu_1,\dots,\mu_k}_{\nu_1,\dots,\nu_l}(V,\cdot )$ are smooth functions of $y^i$ on $\mathcal{U}_{\mathcal{M}}$ and there is a constant $C>0$ so that
\begin{align}
   \left\vert V^{2-\beta^\prime} A[\omega_1,\omega_2]^{\mu_1,\dots,\mu_k}_{\nu_1,\dots,\nu_l}(x)\right\vert \leq C
\end{align}
uniformly in $y^{i}$ within $\mathcal{U}_{\mathcal{M}}$ for some $0<\beta' <\min(\beta,1)$.
\end{prop}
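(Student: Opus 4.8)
The plan is to bound $A:=A[\omega_1,\omega_2]$ by applying the pointwise estimates of Corollary~\ref{cor:wave-CH} \emph{separately in each of the two arguments} of the smooth bi-solution $W:=W[\omega_1,\omega_2]$. Recall from the discussion around \eqref{eq:point-split} that $W$ is real, symmetric, and \emph{smooth} on $\mathcal{M}\times\mathcal{M}$, and solves $\mathcal{P}(x)W=\mathcal{P}(x')W=0$; this smoothness---which is exactly what the individual two-point functions lack---already makes the coincidence limit defining $A$ exist and be smooth in $y^i$ on $\mathcal{U}_\mathcal{M}$, giving the smoothness assertion. I work throughout in the coordinates $(V,y^i)=(r-r_1,t_-,\theta,\varphi_-)$, which are regular across $\mCH$ and in which the metric and the parallel-transport bitensor $g(x,x')$ are smooth; the only source of singular behavior as $V\searrow 0$ is the transversal derivative $\partial_V=\partial_r$, while the $\partial_{y^i}$ are tangential. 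First I would verify that Corollary~\ref{cor:wave-CH} applies to $W(\cdot,x')$ as a solution in $x$, uniformly for $x'$ in $\overline{\mathcal{U}_\mathcal{M}}$: with a time cutoff $\chi(t_*)$ equal to $1$ for large $t_*$ and to $0$ for $t_*\leq 0$, the function $\chi W(\cdot,x')$ is the retarded solution of $\Box_g(\chi W)=[\Box_g,\chi]W=:b_{x'}$, where $b_{x'}$ is smooth, supported in a fixed compact $t_*$-range \emph{bounded away from} $\mCH$, and smoothly dependent on $x'$; the decay results quoted after Proposition~\ref{prop:wave-CH} then furnish the structure \eqref{eq:psi-structure} on $\Omega'$ for $W(\cdot,x')$, so that \eqref{eq:wave-CH-est} and the bounds of Corollary~\ref{cor:wave-CH} hold near $\mCH$ with constants uniform in $x'$.

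The central step is to upgrade these single-variable bounds to a \emph{joint} bound on mixed derivatives, which I would do by using the bi-solution property twice. For any primed multi-index, $\partial_{V'}^{j'}\partial_{y'}^{\alpha'}W(\cdot,x')$ is again a solution in $x$ (the primed derivatives commute with $\mathcal{P}(x)$), so the estimate above yields
\[
  \bigl|\partial_V^j\partial_y^\alpha(\partial_{V'}^{j'}\partial_{y'}^{\alpha'}W)(x,x')\bigr|\leq C\,V^{\min(\beta',1)-\epsilon-j}\,\bigl\|(\partial_{V'}^{j'}\partial_{y'}^{\alpha'}W)(\cdot,x')\bigr\|_{e^{-\alpha' u}H^s(\Omega')},
\]
the norm being taken in the unprimed variable over $\Omega'=\{r_\flat<r<r_\sharp\}$, away from $\mCH$. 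On $\Omega'$ the unprimed point is bounded away from $\mCH$, so $W(x,\cdot)$ is a solution in $x'$ whose unprimed derivatives are harmless; a second application of Corollary~\ref{cor:wave-CH}, now in the primed variable, bounds this norm by $C\,V'^{\min(\beta',1)-\epsilon-j'}$ times a finite seminorm of $W$ on $\Omega'\times\Omega'$ (finite because \emph{both} arguments now stay away from $\mCH$, where $W$ is smooth). Combining gives, near the diagonal at $\mCH$,
\[
  \bigl|\partial_V^j\partial_{V'}^{j'}\partial_y^\alpha\partial_{y'}^{\alpha'}W(x,x')\bigr|\leq C\,V^{\min(\beta',1)-\epsilon-j}\,V'^{\min(\beta',1)-\epsilon-j'}\qquad(j,j'\geq 1),
\]
while for $j=0$ (resp.\ $j'=0$) the first estimate of Corollary~\ref{cor:wave-CH} replaces the corresponding power by $0$.

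Finally I would expand $g(x,x')\mathcal{D}_1(x)\mathcal{D}_2(x')$ in the coordinates $(V,y^i)$, insert the joint bound, and let $x'\to x$ (so $V'\to V$), using that $g(x,x')$ is smooth and tends to the identity at coincidence. Since $m_1+m_2\leq 2$, the total number $j+j'$ of transversal derivatives is at most $2$; the power counting shows that the most singular contribution comes from putting both transversal derivatives on a single argument, i.e.\ $(j,j')=(2,0)$ or $(0,2)$, contributing $V^{\min(\beta',1)-2-\epsilon}$, whereas the split case $(1,1)$ gives the strictly milder $V^{2\min(\beta',1)-2-2\epsilon}$ and all other terms are better. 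Hence $|A(x)|\leq C\,V^{\min(\beta',1)-2-\epsilon}$; writing $\beta'':=\min(\beta',1)-\epsilon$, which lies in $(0,\min(\beta,1))$ because $\beta'\leq\beta$, this is precisely the claimed bound $|V^{2-\beta''}A|\leq C$ (with $\beta''$ in the role of the $\beta'$ appearing in the statement).

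The hard part will be the uniformity in the parameter and the clean, non-circular composition of the two single-variable estimates into the joint bound. One must show that $b_{x'}$ together with all its $x'$-derivatives lies in a fixed ball of the relevant $\mathcal{C}^{m'}$-space as $x'$ ranges over $\overline{\mathcal{U}_\mathcal{M}}$, and that the iteration---first applying Corollary~\ref{cor:wave-CH} in $x$ with the primed derivatives frozen, then in $x'$ on $\Omega'$---terminates at a seminorm of $W$ over $\Omega'\times\Omega'$, where both points are away from $\mCH$; equivalently, one wants a genuine two-variable version of Proposition~\ref{prop:wave-CH} for bi-solutions. It is here that the smoothness of the \emph{difference} $W$, rather than of the separately singular two-point functions, is indispensable. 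A secondary point needing care is to confirm that neither the parallel-transport bitensor nor the tangential parts of $\mathcal{D}_1,\mathcal{D}_2$ hide additional negative powers of $V$; this is guaranteed by working in coordinates regular across $\mCH$.
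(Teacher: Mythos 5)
Your strategy differs genuinely from the paper's: the paper first proves Lemma~\ref{lem: series expansion}, writing $W|_{\mathcal{U}_{\mathcal{M}}\times\mathcal{U}_{\mathcal{M}}}=\sum_i c_i E^+(b_i)\otimes E^+(b_i)$ with $b_i$ real, smooth, supported in a fixed \emph{compact} set (via \cite[Lemma 3.7 and App.~B]{Verch:1992}), and then applies Corollary~\ref{cor:wave-CH} to each single retarded solution $E^+(b_i)$, the summability \eqref{eq: convergence} handling the interchange of sums, derivatives, and limits. You instead attempt a bilinear bootstrap on $W$ itself, using the bi-solution property and a time cutoff to realize $\chi W(\cdot,x')$ as a retarded solution. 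Your final assembly (joint bounds, coincidence limit, power counting with worst case $(j,j')=(2,0)$ giving $V^{\min(\beta',1)-\epsilon-2}$) is correct, and a bootstrap of this kind can be made to work---but your version of it contains a genuine gap, located exactly at what you yourself call ``the hard part.''

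The gap is that your iteration runs in the wrong order and its claimed termination point is not finite for the reason you give. Before you may apply Proposition~\ref{prop:wave-CH} in the unprimed variable to $\partial_{V'}^{j'}\partial_{y'}^{\alpha'}W(\cdot,x')$, you must already know that its input norm $\|\partial_{V'}^{j'}\partial_{y'}^{\alpha'}W(\cdot,x')\|_{e^{-\alpha' u}H^s(\Omega')}$ (equivalently, the norm of the cutoff commutator source) is finite, uniformly as $x'\to\mCH$. You propose to obtain this from a second application of Corollary~\ref{cor:wave-CH} in the primed variable, ``terminating at a finite seminorm of $W$ on $\Omega'\times\Omega'$, finite because both arguments stay away from $\mCH$.'' This is false: $\Omega'\times\Omega'$ is \emph{not} precompact in $\mathcal{M}\times\mathcal{M}$---it extends to $i^+$ in both factors---and the seminorm you actually need carries the exponential weight $e^{\alpha' u}$ in the unprimed variable. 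Smoothness of $W$ on $\mathcal{M}\times\mathcal{M}$ gives no uniform, let alone exponentially weighted, control on such a noncompact set; decay of $W(z,x')$ as $u(z)\to\infty$ is itself a nontrivial consequence of Theorem~\ref{thm:PetersenVasy}, and establishing it for $x'$ near $\mCH$ is precisely what you are in the middle of trying to prove, so the loop does not close. The repair is to reverse the order: first apply the cutoff argument and Theorem~\ref{thm:PetersenVasy}/Corollary~\ref{cor:wave-CH} in the \emph{primed} variable with the \emph{unprimed} point frozen in the compact slab $K=\{0\leq t_*\leq T,\ r_-\leq r\leq r_+\}$; the input is then a $\mathcal{C}^m$ seminorm of $W$ over $K\times K'$ with $K,K'$ compact subsets of $\mathcal{M}$, which \emph{is} finite by smoothness, and the output $\sup_{z\in K}|\partial_z^{\alpha}\partial_{V'}^{j'}\partial_{y'}^{\alpha'}W(z,x')|\leq C\,V'^{\min(\beta',1)-\epsilon-j'}$ is exactly the uniform source bound needed for the subsequent unprimed application. (Note also that on the full spacetime the slab $\{0\leq t_*\leq T\}$ is neither compact nor bounded away from the Cauchy horizon; one must restrict to the domain $\{r_-\leq r\leq r_+\}$ of Theorem~\ref{thm:PetersenVasy}, where the boundaries $r=r_\pm$ are spacelike and future, so that vanishing in $t_*<0$ characterizes the retarded solution.) In short, what makes any version of this argument terminate is compactness of the relevant supports---finite propagation speed---not distance from $\mCH$; this is the same mechanism that makes the support of the kernel $B$ compact in the paper's Lemma~\ref{lem: series expansion}.
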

(Conjecturally, one can take $\beta'$ to be arbitrarily close to $\min(\beta,1)$.) From this, the corresponding claim for $t_{\mu\nu}(x)$ follows immediately by choosing the right combination of derivative operators.

To prove Proposition~\ref{prop:t mu nu bound}, we will first show that $W[\omega,\omega_0](x,x^\prime)$, with $x$ and $x^\prime$ in $\mathcal{U}_{\mathcal{M}}$, can be rewritten as a series of forward solutions $E^+(b_i)$ to the Klein--Gordon equation with smooth and compactly supported sources $b_i\in \mathcal{C}_0^\infty(\mathcal{M})$.

\begin{lem}
\label{lem: series expansion}
Let $\mathcal{M}$ be the physical RNdS or KdS spacetime, and let $\omega_1$, $\omega_2$ be Hadamard states on $\mathcal{A}(\mathcal{M})$. Let $x,y\in\mathcal{U}_{\mathcal{M}}$ as described in Proposition \ref{prop:t mu nu bound}. Then there exists a sequence $(b_i)_{i\in \bN}\subset \mathcal{C}_0^\infty(\mathcal{M})$ of real-valued test functions satisfying 
\begin{align}
\label{eq: convergence}
    \sum\limits_i\norm{b_i}_{\mathcal{C}^m}^2=C(m)<\infty
\end{align}
 for any $m\in\mathbb{N}$ and some constants $C(m)>0$ and, in a distributional sense on $\mathcal{U}_{\mathcal{M}}\times\mathcal{U}_{\mathcal{M}}$,
\begin{align}
    W[\omega_1,\omega_2](x,y)=\sum\limits_ic_iE^+(b_i)(x)E^+(b_i)(y)\, ,
\end{align}
with $c_i=\pm1$.
\end{lem}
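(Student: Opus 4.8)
The plan is to realize $W[\omega_1,\omega_2]$ on $\mathcal{U}_{\mathcal{M}}\times\mathcal{U}_{\mathcal{M}}$ as $E^+\otimes E^+$ applied to a single smooth, compactly supported, symmetric real bi-source, and then to diagonalize that bi-source. Recall from Section~\ref{subsec: scalar field} that since $\omega_1,\omega_2$ are Hadamard, $W:=W[\omega_1,\omega_2]$ is a smooth, real, symmetric bisolution of the Klein--Gordon equation on $\mathcal{M}\times\mathcal{M}$; in particular, for each fixed $y$ the function $W(\cdot,y)$ solves $\mathcal{P}W(\cdot,y)=0$. The signs $c_i=\pm1$ will come from the (generally indefinite) spectrum of the associated integral operator.

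First I would recover $W$ from Cauchy data via the retarded propagator. Fix Cauchy surfaces $\Sigma<\Sigma'$ of the globally hyperbolic spacetime $\mathcal{M}$ lying strictly to the past of $\overline{\mathcal{U}_{\mathcal{M}}}$, a function $\chi\in\mathcal{C}^\infty(\mathcal{M})$ equal to $1$ to the future of $\Sigma'$ and $0$ to the past of $\Sigma$, and a cutoff $\psi\in\mathcal{C}_0^\infty(\mathcal{M})$ equal to $1$ on the compact set $J^-(\overline{\mathcal{U}_{\mathcal{M}}})\cap\operatorname{supp}\td\chi$. For fixed $y$, the function $\chi W(\cdot,y)$ vanishes in the past of $\Sigma$ and satisfies $\mathcal{P}\bigl(\chi W(\cdot,y)\bigr)=[\mathcal{P},\chi]W(\cdot,y)=:\beta_y$, a source supported in the collar between $\Sigma$ and $\Sigma'$; by uniqueness of the retarded solution, $\chi W(\cdot,y)=E^+(\beta_y)$, so that $W(\cdot,y)=E^+(\beta_y)$ on $\mathcal{U}_{\mathcal{M}}$. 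Since $E^+(b)(x)$ depends only on $b|_{J^-(x)}$, replacing $\beta_y$ by $\psi\beta_y$ does not change the result on $\mathcal{U}_{\mathcal{M}}$. Performing the same cutoff in the second variable produces
\[
  B(x',y'):=\psi(x')\psi(y')\,[\mathcal{P},\chi](x')\,[\mathcal{P},\chi](y')\,W(x',y'),
\]
which is smooth (as $W$ is), compactly supported (by $\psi\otimes\psi$), and real and symmetric (as $[\mathcal{P},\chi]$ has real coefficients and $W$ is real symmetric), and satisfies $W(x,y)=(E^+\otimes E^+)B(x,y)$ for all $(x,y)\in\mathcal{U}_{\mathcal{M}}\times\mathcal{U}_{\mathcal{M}}$.

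Next I would diagonalize the integral operator $T_B$ with kernel $B$ on $L^2(\mathcal{M},\td\mu)$. As $B$ is real, symmetric, smooth and compactly supported, $T_B$ is self-adjoint and trace class, so the spectral theorem gives $B=\sum_i\lambda_i\,\phi_i\otimes\phi_i$ with $\lambda_i\in\bR$ and real orthonormal eigenfunctions $\phi_i$; each $\phi_i=\lambda_i^{-1}T_B\phi_i\in\mathcal{C}_0^\infty(\mathcal{M})$ by the smoothing and support properties of $T_B$. Setting $b_i:=\sqrt{|\lambda_i|}\,\phi_i$ and $c_i:=\operatorname{sgn}\lambda_i\in\{\pm1\}$, applying $E^+\otimes E^+$ to the series for $B$ and restricting to $\mathcal{U}_{\mathcal{M}}\times\mathcal{U}_{\mathcal{M}}$ yields the claimed identity $W=\sum_i c_i\,E^+(b_i)(x)E^+(b_i)(y)$, with convergence in $\mathcal{C}^0_{\rm loc}$, hence distributionally.

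The main obstacle, and the only nontrivial point, is the summability \eqref{eq: convergence}: since $\|b_i\|_{\mathcal{C}^m}^2=|\lambda_i|\,\|\phi_i\|_{\mathcal{C}^m}^2$ and the $\mathcal{C}^m$-norms of eigenfunctions satisfy only the crude bound $\|\phi_i\|_{\mathcal{C}^m}\lesssim|\lambda_i|^{-1}$, a naive estimate gives the divergent $\sum_i|\lambda_i|^{-1}$. I would resolve this by a trace-class argument. Fix an auxiliary elliptic operator $P=(1+\Delta)^M$ of order $2M$, with $M$ large enough that Sobolev embedding on the fixed compact support gives $\|\phi_i\|_{\mathcal{C}^m}^2\le C\langle\phi_i,P\phi_i\rangle$. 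Using $|T_B|=T_B\operatorname{sgn}(T_B)$, the boundedness $\|\operatorname{sgn}(T_B)\|\le1$, and the fact that $PT_B$ has the smooth compactly supported kernel $P_{x'}B(x',y')$ and is therefore trace class, I would compute the trace of $PT_B\operatorname{sgn}(T_B)$ in the orthonormal eigenbasis $\{\phi_i\}$, where $PT_B\operatorname{sgn}(T_B)\phi_i=\operatorname{sgn}(\lambda_i)\lambda_i P\phi_i$, to obtain
\[
  \sum_i|\lambda_i|\,\langle\phi_i,P\phi_i\rangle=\operatorname{Tr}\bigl(PT_B\operatorname{sgn}(T_B)\bigr)\le\|PT_B\|_{\rm tr}<\infty .
\]
Hence $\sum_i\|b_i\|_{\mathcal{C}^m}^2=\sum_i|\lambda_i|\|\phi_i\|_{\mathcal{C}^m}^2\le C\sum_i|\lambda_i|\langle\phi_i,P\phi_i\rangle<\infty$, which is \eqref{eq: convergence}. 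The key point is that the magnitude $|\lambda_i|$, paired with one factor of $T_B$'s smoothing inside the trace-class operator $PT_B$, exactly compensates the potential growth of $\|\phi_i\|_{\mathcal{C}^m}$, so that the absolute value appearing in $|T_B|$ does not destroy the rapid decay needed for convergence.
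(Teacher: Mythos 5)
Your proof is correct, and it takes a genuinely different route from the paper's in both of its main steps. For the representation of $W[\omega_1,\omega_2]$ by a compactly supported bi-source, the paper dualizes against test functions via the map $f\mapsto\tilde f=\mathcal{P}(\chi_+E(f))$ and invokes \cite[Lemma 3.7]{Verch:1992}, which yields a kernel $B$ paired with $E(f)E(h)$; it must then convert $E$ into $E^+$ at the very end, using that $E^+(f)$ vanishes on $\supp b_i$, Green's formula, and dominated convergence. You instead cut off $W$ itself and appeal to uniqueness of retarded solutions, which hands you the $E^+\otimes E^+$ representation directly and makes that conversion step unnecessary. For the diagonalization $B=\sum_i c_i\,b_i\otimes b_i$ with the crucial summability \eqref{eq: convergence}, the paper cites \cite[App.\ B]{Verch:1992} (in essence a Fourier-expansion/polarization argument), whereas you use the spectral theorem for the trace-class operator $T_B$ and control
\begin{equation*}
  \sum_i|\lambda_i|\langle\phi_i,P\phi_i\rangle=\operatorname{Tr}\bigl(PT_B\operatorname{sgn}(T_B)\bigr)\leq\norm{PT_B}_{\rm tr}<\infty ,
\end{equation*}
which is a clean, self-contained way to defeat the indefiniteness of the signs: the smoothing factor $T_B$ inside the manifestly trace-class $PT_B$ absorbs the growth of $\norm{\phi_i}_{\mathcal{C}^m}$. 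What the paper's route buys is brevity (two citations replace your two arguments); what yours buys is self-containedness, a structurally simpler chain $W=(E^+\otimes E^+)B$ on $\mathcal{U}_{\mathcal{M}}\times\mathcal{U}_{\mathcal{M}}$, and the side benefit that your $b_i$ are mutually orthogonal.

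A few points you should make explicit, none of which is a genuine gap. First, the precompactness of $J^-(\overline{\mathcal{U}_{\mathcal{M}}})\cap\supp\td\chi$, which you need in order to choose $\psi$, is exactly the geometric fact the paper asserts for its compact set $G$; note that $\overline{\mathcal{U}_{\mathcal{M}}}$ itself is \emph{not} compact in $\mathcal{M}$ (it reaches the Cauchy horizon), so this really uses that the Cauchy surfaces lie in the past of $\mathcal{U}_{\mathcal{M}}$. Second, $\beta_y$ is supported in the collar but not compactly (the collar is spatially noncompact), so the identity $\chi W(\cdot,y)=E^+(\beta_y)$ uses the standard extension of $E^+$ to past-compactly supported sources; alternatively, insert the cutoff $\psi$ before invoking uniqueness. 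Third, the eigenfunctions can be chosen real because $T_B$ commutes with complex conjugation, and eigenfunctions with $\lambda_i=0$ must be discarded so that $c_i\in\{\pm1\}$. Finally, for the interchange of $E^+\otimes E^+$ with the $L^2$-convergent spectral series it is cleanest to pair against $f\otimes h$ with $f,h\in\mathcal{C}_0^\infty(\mathcal{U}_{\mathcal{M}})$ and use that the adjoint of $E^+$ is $E^-$; your appeal to $\mathcal{C}^0_{\rm loc}$ convergence is only justified retroactively, once \eqref{eq: convergence} is established.
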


\begin{proof}
Recall that since both $\omega_1$ and $\omega_2$ are Hadamard states on $\mathcal{A}(\mathcal{M})$, $W[\omega_1,\omega_2](x,y)$ is a smooth, real, and symmetric function on $\mathcal{M}\times\mathcal{M}$ which solves $\mathcal{P}(x)W[\omega_1,\omega_2](x,y)=\mathcal{P}(y)W[\omega_1,\omega_2](x,y)=0$. In the rest of the proof, we will write $W(x,y)=W[\omega_1,\omega_2](x,y)$ for brevity of notation.

Let $\Sigma_\pm$ be two Cauchy surfaces of $\mathcal{M}$ to the past of $\mathcal{U}_{\mathcal{M}}$ satisfying $\Sigma_+\subset I^+(\Sigma_-)$, and define a subordinate partition of unity $(\chi_+,\chi_-)\in \mathcal{C}^\infty(\mathcal{M})$ on $\mathcal{M}$ satisfying $\chi_\pm=1$ on $J^\pm(\Sigma_\pm)$ and $\chi_\pm=0$ on $J^\mp(\Sigma_\mp)$.
Then the linear map
\begin{align}
    \mathcal{C}_0^\infty(\mathcal{U}_{\mathcal{M}})\to \mathcal{C}^\infty(\mathcal{M}), \, f\mapsto \tilde{f}=\mathcal{P}(\chi_+E(f))
\end{align}
maps test functions supported in $\mathcal{U}_{\mathcal{M}}$
to test functions supported in the closure of $J^+(\Sigma_-)\cap J^-(\Sigma_+)\cap J(\mathcal{U}_{\mathcal{M}})$. The closure of this set is a compact subset of $\mathcal{M}$ and we will call it $G$.

Moreover, the map satisfies $E(\tilde{f})=E(f)$ for any $f\in \mathcal{C}_0^\infty(\mathcal{U}_{\mathcal{M}})$. Taking into account that the kernel of $E$ as a map acting on $\mathcal{C}_0^\infty(\mathcal{M})$ is $\mathcal{P}\mathcal{C}_0^\infty(\mathcal{M})$, there must be a function $f_0\in \mathcal{C}_0^\infty(\mathcal{M})$ so that $f=\tilde{f}+\mathcal{P}f_0$. 

Let $f$, $h \in \mathcal{C}_0^\infty(\mathcal{U}_{\mathcal{M}})$. 
Then by an application of Green's second identity (i.e. integration by parts), the preceding results entail
\begin{align} 
   \int\limits_{\mathclap{\mathcal{M}\times\mathcal{M}}} W(x,y)f(x)h(y)\td vol_g(x)\td vol_g(y) =\int\limits_{\mathclap{\mathcal{M}\times\mathcal{M}}} W(x,y)\tilde{f}(x)\tilde{h}(y)\td vol_g(x)\td vol_g(y)\, .
\end{align}

Next, let $\sigma_\pm$ be a second pair of Cauchy surfaces for $\mathcal{M}$ so that $\sigma_\pm\subset I^\pm(\Sigma_\pm)$ and $\mathcal{U}_{\mathcal{M}}\subset I^+(\sigma_+)$, and let us denote $\tilde{G}=J^-(\sigma_+)\cap J^+(\sigma_-)$.
It then follows from \cite[Lemma 3.7]{Verch:1992} that there is a $B\in \mathcal{C}_0^\infty(\mathcal{M}\times\mathcal{M})$ with support contained in $\tilde{G}\times\tilde{G}$ which satisfies
\begin{align}
    \int\limits_{\mathclap{\mathcal{M}\times\mathcal{M}}} W(x,y)\tilde{f}(x)\tilde{h}(y)\td vol_g(x)\td vol_g(y)&=\int\limits_{\mathclap{\mathcal{M}\times\mathcal{M}}} B(x,y)E(\tilde{f})(x)E(\tilde{h})(y)\td vol_g(x)\td vol_g(y)\\
    &=\int\limits_{\mathclap{\mathcal{M}\times\mathcal{M}}} B(x,y)E(f)(x)E(h)(y)\td vol_g(x)\td vol_g(y)\,. 
\end{align}
Indeed, following the proof of \cite[Lemma 3.7]{Verch:1992}, for $(x,y)\in G\times G$, $B$ is of the form
\begin{align}
    B(x,y)=\mathcal{P}(x)\mathcal{P}(y)\chi(x)\chi(y)W(x,y)\, .
\end{align}
Here, $\chi\in \mathcal{C}^\infty(\mathcal{M})$ is equal to one on $I^-(\sigma_-)$, and vanishes on $I^+(\sigma_+)$. 
An illustration of the various Cauchy surfaces and relevant subsets of $\mathcal{M}$ is shown in Figure \ref{fig: illustration}.

\begin{figure}
    \centering
    \includegraphics[scale=1]{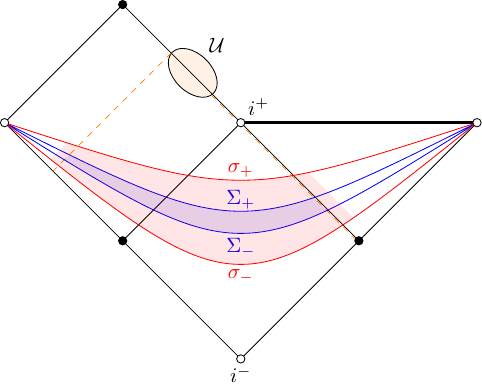}
    \caption{Illustration of the construction for the proof of Lemma \ref{lem: series expansion}. The orange ellipse represents the set $\mathcal{U}$. $J^-(\mathcal{U}_{\mathcal{M}})$ is indicated by the dashed orange lines. The blue and red hypersurfaces represent $\Sigma_\pm$ and $\sigma_\pm$, respectively. The filled blue region is the compact set $G$, the red region is $\supp(B)$.}
    \label{fig: illustration}
\end{figure}

Next, we note that $B\in \mathcal{C}_0^\infty(\tilde{G}\times\tilde{G})$ can be written in the form 
\begin{align}
    B(x,y)= \sum\limits_i c_i b_i(x) b_i(y)\,,
\end{align}
with $c_i=\pm 1$ and the $b_i\in \mathcal{C}_0^\infty(\tilde{G})$ are real-valued and satisfy \eqref{eq: convergence}.
This follows for example from \cite[App. B]{Verch:1992} and the symmetry properties of $B$.

Therefore, the functions 
\begin{align*}
    B_n(x,y)=\sum\limits_{i=0}^nc_i b_i(x) b_i(y)
\end{align*}
are integrable and bounded by an integrable function, namely the characteristic function of $\supp(B)$ multiplied by the constant $C$ from \eqref{eq: convergence}.
Thus, for $f$, $h\in \mathcal{C}_0^\infty(\mathcal{U}_{\mathcal{M}})$,
\begin{align}
    &\int\limits_{\mathclap{\mathcal{M}\times\mathcal{M}}}\sum\limits_i c_i b_i(x) b_i(y) E(f)(x)E(h)(y)\td vol_g(x)\td vol_g(y)\nonumber\\
    &= \sum\limits_i c_i \int\limits_{\mathclap{\mathcal{M}\times\mathcal{M}}} b_i(x) b_i(y) E^-(f)(x)E^-(h)(y)\td vol_g(x)\td vol_g(y)\nonumber\\
    &= \sum\limits_i c_i \int\limits_{\mathclap{\mathcal{M}\times\mathcal{M}}} \mathcal{P}(x)E^+(b_i)(x) \mathcal{P}(y)E^+(b_i)(y) E^-(f)(x)E^-(h)(y)\td vol_g(x)\td vol_g(y)\nonumber\\
    &= \sum\limits_i c_i \int\limits_{\mathclap{\mathcal{M}\times\mathcal{M}}} E^+(b_i)(x) E^+(b_i)(y) f(x)h(y)\td vol_g(x)\td vol_g(y)\nonumber\\
    &=\int\limits_{\mathclap{\mathcal{M}\times\mathcal{M}}}\sum\limits_i c_i E^+(b_i)(x) E^+(b_i)(y) f(x)h(y)\td vol_g(x)\td vol_g(y)\, ,
\end{align}
where the first equality follows from dominated convergence, the second from the properties of the fundamental solutions, the third one from Green's formula together with the compact support of $f$ and $h$ as well as the property of the fundamental solution, and the last one again from dominated convergence using the continuity of the Green operators \cite{Baer:2013}. 
This concludes the proof of the lemma.
\end{proof}

We can now prove Proposition \ref{prop:t mu nu bound}.
\begin{proof}[Proof of Proposition \ref{prop:t mu nu bound}:]
  By Lemma~\ref{lem: series expansion}, we can find a sequence of real-valued test functions $b_i\in \mathcal{C}_0^\infty(\mathcal{M})$ satisfying \eqref{eq: convergence} and with support in the red region in Figure~\ref{fig: illustration}, so that for $x\in \mathcal{U}_{\mathcal{M}}$
\begin{align}
\label{eq:decomposition}
    A[\omega_1,\omega_2]^{\mu_1,\dots,\mu_k}_{\nu_1,\dots,\nu_l}(x)(x)&=\lim\limits_{x^\prime\to x}\left(g(x,x^\prime) \mathcal{D}_1(x) \mathcal{D}_2(x^\prime)\right)^{\mu_1,\dots,\mu_k}_{\nu_1,\dots,\nu_l}W[\omega_1,\omega_2](x,x^\prime)\\\nonumber
    &=\lim\limits_{x^\prime\to x}\left(g(x,x^\prime) \mathcal{D}_1(x) \mathcal{D}_2(x^\prime)\right)^{\mu_1,\dots,\mu_k}_{\nu_1,\dots,\nu_l}(x,x^\prime)\sum\limits_ic_i\psi_i(x)\psi_i(x^\prime)\, \\\nonumber
    &=\sum\limits_i c_i \mathcal{D}_1\psi_i(x)\mathcal{D}_2\psi_i(x)\, .
\end{align}
Here we have introduced the notation $\psi_i(x)=E^+(b_i)(x)$. 
To show that the last equality holds, recall that  the continuity of the Green's operator $E^+:\mathcal{C}_0^\infty(\mathcal{M})\to \mathcal{C}^\infty(\mathcal{M})$ implies that for any compact $L\subset \mathcal{M}$ and any $m\in \bN$, there is a $m^\prime\in \bN$ and a constant $C>0$ such that 
\begin{align}
\label{eq:Cm estimate}
    \norm{\psi_i}_{\mathcal{C}^m(L)}\leq C\norm{b_i}_{\mathcal{C}^{m^\prime}}\, \quad \forall i\, .
\end{align}
Here, the $\mathcal{C}^{m^\prime}$-norm of $b_i$ is taken over a compact set containing the supports of all $b_i$. Let $K$, $K^\prime\subset \mathcal{M}$ be compact neighbourhoods of $x$ and $x^\prime$ respectively. Together with \eqref{eq: convergence}, \eqref{eq:Cm estimate} implies
\begin{align}
    \sup\limits_{K\times K^\prime} \left\vert \sum\limits_i c_i \partial_\mu \psi_i(x)\partial_\nu\psi_i(y)\right\vert \leq \sum\limits \norm{\psi_i}_{\mathcal{C}^1(K\cup K^\prime)}^2\leq C\sum\limits_i \norm{b_i}^2_{\mathcal{C}^{m^\prime(1)}}\leq \tilde{C}<\infty\, 
\end{align}
for some constant $\tilde{C}>0$.

Similar estimates can be obtained for any other number of derivatives. From these bounds one can deduce uniform convergence of the partial sums over $c_i\partial_\mu \psi_i(x)\partial_\nu \psi_i(x^\prime)$ on $K\times K^\prime$. The uniform convergence, together with the convergence of the series $\sum_i c_i\psi_i(x)\psi_i(x^\prime)$ to $W[\omega,\omega_0](x,x^\prime)$ justifies the interchange of the differentiation and the infinite sum.

Next, we consider the case where $x$ approaches the Cauchy horizon from within $\mathcal{U}_{\mathcal{M}}$. The bounds obtained in Section~\ref{sec:reg from decay}, Corollary~\ref{cor:wave-CH}, can then be used to conclude
\begin{subequations}
\label{eq:terms estimate}
\begin{align}
    \abs{\sum\limits_ic_i\partial_{y}^\gamma\psi_i(x)\partial_{y}^\delta\psi_i(x)}&\leq \sum\limits_i\abs{\partial_{y}^\gamma\psi_i(x)}\abs{\partial_y^\delta\psi_i(x)}\leq C_1 \sum\limits_i \norm{b_i}_{\mathcal{C}^{m_1}}^2\, ,\\
    \abs{V^{1-\beta'}\sum\limits_i c_i\partial_y^\gamma\psi_i(x)\partial_r \partial_y^\delta \psi_i(x)}&\leq \sum\limits_i\abs{V^{1-\beta'}\partial_r\partial_y^\delta \psi_i(x)}\abs{\partial_y^\gamma\psi_i(x)}\leq C_2 \sum\limits_i \norm{b_i}_{\mathcal{C}^{m_2}}^2\, ,\\
    \abs{V^{2-2\beta'}\sum\limits_i c_i(\partial_r\psi_i(x))^2}&\leq \sum\limits_i\abs{V^{1-\beta'}\partial_r\psi_i(x)}^2\leq C_3 \sum\limits_i \norm{b_i}_{\mathcal{C}^{m_3}}^2\, ,\\
     \abs{V^{2-\beta'}\sum\limits_i c_i\psi_i(x)\partial_r^2\psi_i(x)}&\leq \sum\limits_i\abs{V^{2-\beta'}\partial_r^2\psi_i(x)}\abs{\psi_i(x)}\leq C_4 \sum\limits_i \norm{b_i}_{\mathcal{C}^{m_4}}^2\, .
\end{align}
\end{subequations}
Here, $\gamma$, $\delta\in \bN^3\cup\{0\}$ are multi-indices, $C_j>0$, $j=1,2,3,4$, are constants, and $m_j\in \bN$. It then follows from the convergence in \eqref{eq: convergence} that the infinite sums on the right hand sides of \eqref{eq:terms estimate} are finite. Since we chose a set of coordinates in which the metric is analytically extendable across the ingoing Cauchy horizon, and since $\overline{\mathcal{U}}$ is a compact subset of the analytic extension, the smooth coefficients of the differential operators $\mathcal{D}_1$ and $\mathcal{D}_2$ are bounded on $\mathcal{U}_{\mathcal{M}}$. Combining these bounds with the ones obtained from \eqref{eq:terms estimate} then finishes the proof of the proposition. 
\end{proof}

\section{Conclusion}
\label{sec: Conclusion}
In this work we have shown that under the assumption of a positive spectral gap and mode stability, the difference of the expectation values between two Hadamard states of any quadratic observable with up to two derivatives of a real scalar quantum field on a subextremal KdS or RNdS spacetime is bounded by $(r-r_1)^{-2+\beta^\prime}$ near the Cauchy horizon. In particular, this includes the stress-energy tensor of the scalar field. As a result, a quadratic divergence of the stress-energy tensor at the Cauchy horizon in any state, such as the ones obtained numerically in \cite{Hollands:2020, Zilberman:2019, Zilberman:2022, CHKS:2023}, is, to leading order, universal in the sense of state independence. Thus, the present work lends additional importance to these numerical studies regarding the behaviour of quantum fields at the Cauchy horizon.

One can see from the method of the proof that our result can be generalized in a straightforward fashion to observables with more derivatives, and that the bound can be strengthened if the number of derivatives in the direction transversal to the Cauchy horizon (in total or on each of the fields) is lowered. Therefore, this work implies universality results of other observables like the scalar field condensate. It should also be possible to apply the arguments presented here to higher-order Wick polynomials of the scalar field, at least under the assumption that the Hadamard states involved are quasi-free.

 This work significantly extends the universality result obtained in \cite{Hollands:2019}. It suggests that perturbations caused by quantum effects can have a significant impact on the (in)stability of Cauchy horizons in asymptotically de-Sitter spacetimes. Moreover, this impact is (at leading order) independent of how the quantum field is set up. Consequently, quantum effects are likely to play an important role in fully resolving questions such as the sCC conjecture, not only in cases where the sCC can be violated classically, but in a wider variety of settings, including also the one of rotating black holes.

\section*{Acknowledgements}
C.K.~is funded by the Deutsche Forschungsgemeinschaft (DFG) under the Grant No. 406116891 within the Research Training Group RTG 2522/1. The authors would like to thank the Erwin Schr\"odinger Institut, Vienna, where part of this work has been completed, for its hospitality.

\bibliography{bibliography}

%apsrev4-2.bst 2019-01-14 (MD) hand-edited version of apsrev4-1.bst
%Control: key (0)
%Control: author (8) initials jnrlst
%Control: editor formatted (1) identically to author
%Control: production of article title (0) allowed
%Control: page (0) single
%Control: year (1) truncated
%Control: production of eprint (0) enabled
\begin{thebibliography}{50}%
\makeatletter
\providecommand \@ifxundefined [1]{%
 \@ifx{#1\undefined}
}%
\providecommand \@ifnum [1]{%
 \ifnum #1\expandafter \@firstoftwo
 \else \expandafter \@secondoftwo
 \fi
}%
\providecommand \@ifx [1]{%
 \ifx #1\expandafter \@firstoftwo
 \else \expandafter \@secondoftwo
 \fi
}%
\providecommand \natexlab [1]{#1}%
\providecommand \enquote  [1]{``#1''}%
\providecommand \bibnamefont  [1]{#1}%
\providecommand \bibfnamefont [1]{#1}%
\providecommand \citenamefont [1]{#1}%
\providecommand \href@noop [0]{\@secondoftwo}%
\providecommand \href [0]{\begingroup \@sanitize@url \@href}%
\providecommand \@href[1]{\@@startlink{#1}\@@href}%
\providecommand \@@href[1]{\endgroup#1\@@endlink}%
\providecommand \@sanitize@url [0]{\catcode `\\12\catcode `\$12\catcode
  `\&12\catcode `\#12\catcode `\^12\catcode `\_12\catcode `\%12\relax}%
\providecommand \@@startlink[1]{}%
\providecommand \@@endlink[0]{}%
\providecommand \url  [0]{\begingroup\@sanitize@url \@url }%
\providecommand \@url [1]{\endgroup\@href {#1}{\urlprefix }}%
\providecommand \urlprefix  [0]{URL }%
\providecommand \Eprint [0]{\href }%
\providecommand \doibase [0]{https://doi.org/}%
\providecommand \selectlanguage [0]{\@gobble}%
\providecommand \bibinfo  [0]{\@secondoftwo}%
\providecommand \bibfield  [0]{\@secondoftwo}%
\providecommand \translation [1]{[#1]}%
\providecommand \BibitemOpen [0]{}%
\providecommand \bibitemStop [0]{}%
\providecommand \bibitemNoStop [0]{.\EOS\space}%
\providecommand \EOS [0]{\spacefactor3000\relax}%
\providecommand \BibitemShut  [1]{\csname bibitem#1\endcsname}%
\let\auto@bib@innerbib\@empty
%</preamble>
\bibitem [{\citenamefont {Penrose}(1974)}]{Penrose:1974}%
  \BibitemOpen
  \bibfield  {author} {\bibinfo {author} {\bibfnamefont {R.}~\bibnamefont
  {Penrose}},\ }\bibinfo {title} {Gravitational radiation and gravitational
  collapse}\ (\bibinfo  {publisher} {Springer},\ \bibinfo {address}
  {Heidelberg},\ \bibinfo {year} {1974})\ Chap.\ \bibinfo {chapter}
  {Gravitational collapse}\BibitemShut {NoStop}%
\bibitem [{\citenamefont {Sbierski}(2018)}]{Sbierski:2018}%
  \BibitemOpen
  \bibfield  {author} {\bibinfo {author} {\bibfnamefont {J.}~\bibnamefont
  {Sbierski}},\ }\bibfield  {title} {\bibinfo {title} {{The
  {$C_0$}-inextendibility of the {S}chwarzschild spacetime and the spacelike
  diameter in Lorentzian geometry}},\ }\href
  {https://doi.org/10.4310/jdg/1518490820} {\bibfield  {journal} {\bibinfo
  {journal} {Journal of Differential Geometry}\ }\textbf {\bibinfo {volume}
  {108}},\ \bibinfo {pages} {319 } (\bibinfo {year} {2018})}\BibitemShut
  {NoStop}%
\bibitem [{\citenamefont {Dafermos}\ and\ \citenamefont
  {Luk}(2017)}]{Dafermos:2017}%
  \BibitemOpen
  \bibfield  {author} {\bibinfo {author} {\bibfnamefont {M.}~\bibnamefont
  {Dafermos}}\ and\ \bibinfo {author} {\bibfnamefont {J.}~\bibnamefont {Luk}},\
  }\bibfield  {title} {\bibinfo {title} {{The interior of dynamical vacuum
  black holes I: The $C^0$-stability of the Kerr Cauchy horizon}},\ }\href@noop
  {} {\bibfield  {journal} {\bibinfo  {journal} {Preprint, arXiv}\ } (\bibinfo
  {year} {2017})},\ \Eprint {https://arxiv.org/abs/1710.01722}
  {arXiv:1710.01722 [gr-qc]} \BibitemShut {NoStop}%
\bibitem [{\citenamefont {Klainerman}\ and\ \citenamefont
  {Szeftel}(2023)}]{Klainerman:2023}%
  \BibitemOpen
  \bibfield  {author} {\bibinfo {author} {\bibfnamefont {S.}~\bibnamefont
  {Klainerman}}\ and\ \bibinfo {author} {\bibfnamefont {J.}~\bibnamefont
  {Szeftel}},\ }\bibfield  {title} {\bibinfo {title} {Kerr stability for small
  angular momentum},\ }\href@noop {} {\bibfield  {journal} {\bibinfo  {journal}
  {Pure and Applied Mathematics Quarterly}\ }\textbf {\bibinfo {volume} {19}},\
  \bibinfo {pages} {791} (\bibinfo {year} {2023})}\BibitemShut {NoStop}%
\bibitem [{\citenamefont {Christodoulou}(2009)}]{Christodoulou:2008}%
  \BibitemOpen
  \bibfield  {author} {\bibinfo {author} {\bibfnamefont {D.}~\bibnamefont
  {Christodoulou}},\ }\href@noop {} {\emph {\bibinfo {title} {{The Formation of
  Black Holes in General Relativity}}}}\ (\bibinfo  {publisher} {European
  Mathematical Society Publishing House},\ \bibinfo {address} {Zürich},\
  \bibinfo {year} {2009})\ \Eprint {https://arxiv.org/abs/0805.3880}
  {arXiv:0805.3880 [gr-qc]} \BibitemShut {NoStop}%
\bibitem [{\citenamefont {Dafermos}(2005)}]{Dafermos:2003}%
  \BibitemOpen
  \bibfield  {author} {\bibinfo {author} {\bibfnamefont {M.}~\bibnamefont
  {Dafermos}},\ }\bibfield  {title} {\bibinfo {title} {{The interior of charged
  black holes and the problem of uniqueness in general relativity}},\
  }\href@noop {} {\bibfield  {journal} {\bibinfo  {journal} {Commun. Pure Appl.
  Math.}\ }\textbf {\bibinfo {volume} {58}},\ \bibinfo {pages} {0445} (\bibinfo
  {year} {2005})},\ \Eprint {https://arxiv.org/abs/gr-qc/0307013}
  {arXiv:gr-qc/0307013} \BibitemShut {NoStop}%
\bibitem [{\citenamefont {Gajic}\ and\ \citenamefont {Luk}(2019)}]{Gajic:2017}%
  \BibitemOpen
  \bibfield  {author} {\bibinfo {author} {\bibfnamefont {D.}~\bibnamefont
  {Gajic}}\ and\ \bibinfo {author} {\bibfnamefont {J.}~\bibnamefont {Luk}},\
  }\bibfield  {title} {\bibinfo {title} {{The interior of dynamical extremal
  black holes in spherical symmetry}},\ }\href
  {https://doi.org/10.2140/paa.2019.1.263} {\bibfield  {journal} {\bibinfo
  {journal} {Pure Appl. Anal.}\ }\textbf {\bibinfo {volume} {1}},\ \bibinfo
  {pages} {263} (\bibinfo {year} {2019})},\ \Eprint
  {https://arxiv.org/abs/1709.09137} {arXiv:1709.09137 [gr-qc]} \BibitemShut
  {NoStop}%
\bibitem [{\citenamefont {Luk}\ and\ \citenamefont
  {Oh}(2019{\natexlab{a}})}]{Luk:2019a}%
  \BibitemOpen
  \bibfield  {author} {\bibinfo {author} {\bibfnamefont {J.}~\bibnamefont
  {Luk}}\ and\ \bibinfo {author} {\bibfnamefont {S.-J.}\ \bibnamefont {Oh}},\
  }\bibfield  {title} {\bibinfo {title} {{S}trong cosmic censorship in
  spherical symmetry for two-ended asymptotically flat initial data {I}. {T}he
  interior of the black hole region},\ }\href@noop {} {\bibfield  {journal}
  {\bibinfo  {journal} {Annals of Mathematics}\ }\textbf {\bibinfo {volume}
  {190}},\ \bibinfo {pages} {1} (\bibinfo {year}
  {2019}{\natexlab{a}})}\BibitemShut {NoStop}%
\bibitem [{\citenamefont {Luk}\ and\ \citenamefont
  {Oh}(2019{\natexlab{b}})}]{Luk:2019b}%
  \BibitemOpen
  \bibfield  {author} {\bibinfo {author} {\bibfnamefont {J.}~\bibnamefont
  {Luk}}\ and\ \bibinfo {author} {\bibfnamefont {S.-J.}\ \bibnamefont {Oh}},\
  }\bibfield  {title} {\bibinfo {title} {{S}trong {C}osmic {C}ensorship in
  {S}pherical {S}ymmetry for {T}wo-{E}nded {A}symptotically {F}lat {I}nitial
  {D}ata {I}{I}: {T}he {E}xterior of the {B}lack {H}ole {R}egion},\ }\href@noop
  {} {\bibfield  {journal} {\bibinfo  {journal} {Annals of PDE}\ }\textbf
  {\bibinfo {volume} {5}},\ \bibinfo {pages} {6} (\bibinfo {year}
  {2019}{\natexlab{b}})}\BibitemShut {NoStop}%
\bibitem [{\citenamefont {Kehle}\ and\ \citenamefont {Van~de
  Moortel}(2021)}]{Kehle:2023}%
  \BibitemOpen
  \bibfield  {author} {\bibinfo {author} {\bibfnamefont {C.}~\bibnamefont
  {Kehle}}\ and\ \bibinfo {author} {\bibfnamefont {M.}~\bibnamefont {Van~de
  Moortel}},\ }\bibfield  {title} {\bibinfo {title} {{S}trong {C}osmic
  {C}ensorship in the presence of matter: the decisive effect of horizon
  oscillations on the black hole interior geometry},\ }\href@noop {} {\bibfield
   {journal} {\bibinfo  {journal} {Preprint, arXiv:2105.04604}\ } (\bibinfo
  {year} {2021})}\BibitemShut {NoStop}%
\bibitem [{\citenamefont {Luk}\ and\ \citenamefont {Oh}(2017)}]{Luk:2015}%
  \BibitemOpen
  \bibfield  {author} {\bibinfo {author} {\bibfnamefont {J.}~\bibnamefont
  {Luk}}\ and\ \bibinfo {author} {\bibfnamefont {S.-J.}\ \bibnamefont {Oh}},\
  }\bibfield  {title} {\bibinfo {title} {{Proof of linear instability of the
  Reissner\textendash{}Nordstr\"om Cauchy horizon under scalar
  perturbations}},\ }\href {https://doi.org/10.1215/00127094-3715189}
  {\bibfield  {journal} {\bibinfo  {journal} {Duke Math. J.}\ }\textbf
  {\bibinfo {volume} {166}},\ \bibinfo {pages} {437} (\bibinfo {year}
  {2017})},\ \Eprint {https://arxiv.org/abs/1501.04598} {arXiv:1501.04598
  [gr-qc]} \BibitemShut {NoStop}%
\bibitem [{\citenamefont {Dafermos}\ and\ \citenamefont
  {Shlapentokh-Rothman}(2017)}]{Dafermos:2015}%
  \BibitemOpen
  \bibfield  {author} {\bibinfo {author} {\bibfnamefont {M.}~\bibnamefont
  {Dafermos}}\ and\ \bibinfo {author} {\bibfnamefont {Y.}~\bibnamefont
  {Shlapentokh-Rothman}},\ }\bibfield  {title} {\bibinfo {title}
  {{Time-Translation Invariance of Scattering Maps and Blue-Shift Instabilities
  on Kerr Black Hole Spacetimes}},\ }\href
  {https://doi.org/10.1007/s00220-016-2771-z} {\bibfield  {journal} {\bibinfo
  {journal} {Commun. Math. Phys.}\ }\textbf {\bibinfo {volume} {350}},\
  \bibinfo {pages} {985} (\bibinfo {year} {2017})},\ \Eprint
  {https://arxiv.org/abs/1512.08260} {arXiv:1512.08260 [gr-qc]} \BibitemShut
  {NoStop}%
\bibitem [{\citenamefont {Hintz}\ and\ \citenamefont
  {Vasy}(2017)}]{Hintz:2015}%
  \BibitemOpen
  \bibfield  {author} {\bibinfo {author} {\bibfnamefont {P.}~\bibnamefont
  {Hintz}}\ and\ \bibinfo {author} {\bibfnamefont {A.}~\bibnamefont {Vasy}},\
  }\bibfield  {title} {\bibinfo {title} {{Analysis of linear waves near the
  Cauchy horizon of cosmological black holes}},\ }\href
  {https://doi.org/10.1063/1.4996575} {\bibfield  {journal} {\bibinfo
  {journal} {J.\ Math.\ Phys.}\ }\textbf {\bibinfo {volume} {58}},\ \bibinfo
  {pages} {081509} (\bibinfo {year} {2017})},\ \Eprint
  {https://arxiv.org/abs/1512.08004} {arXiv:1512.08004 [math.AP]} \BibitemShut
  {NoStop}%
\bibitem [{\citenamefont {Cardoso}\ \emph {et~al.}(2018)\citenamefont
  {Cardoso}, \citenamefont {Costa}, \citenamefont {Destounis}, \citenamefont
  {Hintz},\ and\ \citenamefont {Jansen}}]{Cardoso:2017}%
  \BibitemOpen
  \bibfield  {author} {\bibinfo {author} {\bibfnamefont {V.}~\bibnamefont
  {Cardoso}}, \bibinfo {author} {\bibfnamefont {J.~a.~L.}\ \bibnamefont
  {Costa}}, \bibinfo {author} {\bibfnamefont {K.}~\bibnamefont {Destounis}},
  \bibinfo {author} {\bibfnamefont {P.}~\bibnamefont {Hintz}},\ and\ \bibinfo
  {author} {\bibfnamefont {A.}~\bibnamefont {Jansen}},\ }\bibfield  {title}
  {\bibinfo {title} {{Quasinormal modes and Strong Cosmic Censorship}},\ }\href
  {https://doi.org/10.1103/PhysRevLett.120.031103} {\bibfield  {journal}
  {\bibinfo  {journal} {Phys. Rev. Lett.}\ }\textbf {\bibinfo {volume} {120}},\
  \bibinfo {pages} {031103} (\bibinfo {year} {2018})},\ \Eprint
  {https://arxiv.org/abs/1711.10502} {arXiv:1711.10502 [gr-qc]} \BibitemShut
  {NoStop}%
\bibitem [{\citenamefont {Hintz}\ and\ \citenamefont
  {Vasy}(2018)}]{Hintz:2018}%
  \BibitemOpen
  \bibfield  {author} {\bibinfo {author} {\bibfnamefont {P.}~\bibnamefont
  {Hintz}}\ and\ \bibinfo {author} {\bibfnamefont {A.}~\bibnamefont {Vasy}},\
  }\bibfield  {title} {\bibinfo {title} {{The global non-linear stability of
  the Kerr--de Sitter family of black holes}},\ }\href
  {https://doi.org/10.4310/ACTA.2018.v220.n1.a1} {\bibfield  {journal}
  {\bibinfo  {journal} {Acta Mathematica}\ }\textbf {\bibinfo {volume} {220}},\
  \bibinfo {pages} {1 } (\bibinfo {year} {2018})}\BibitemShut {NoStop}%
\bibitem [{\citenamefont {Hintz}(2018)}]{Hintz:2018b}%
  \BibitemOpen
  \bibfield  {author} {\bibinfo {author} {\bibfnamefont {P.}~\bibnamefont
  {Hintz}},\ }\bibfield  {title} {\bibinfo {title} {{Non-linear Stability of
  the Kerr--Newman--de Sitter Family of Charged Black Holes}},\ }\href
  {https://doi.org/10.1007/s40818-018-0047-y} {\bibfield  {journal} {\bibinfo
  {journal} {Annales of PDE}\ }\textbf {\bibinfo {volume} {4}},\ \bibinfo
  {pages} {2199} (\bibinfo {year} {2018})}\BibitemShut {NoStop}%
\bibitem [{\citenamefont {Costa}\ \emph {et~al.}(2018)\citenamefont {Costa},
  \citenamefont {Gir{\~a}o}, \citenamefont {Nat{\'a}rio},\ and\ \citenamefont
  {Silva}}]{Costa:2018}%
  \BibitemOpen
  \bibfield  {author} {\bibinfo {author} {\bibfnamefont {J.~L.}\ \bibnamefont
  {Costa}}, \bibinfo {author} {\bibfnamefont {P.~M.}\ \bibnamefont
  {Gir{\~a}o}}, \bibinfo {author} {\bibfnamefont {J.}~\bibnamefont
  {Nat{\'a}rio}},\ and\ \bibinfo {author} {\bibfnamefont {J.~D.}\ \bibnamefont
  {Silva}},\ }\bibfield  {title} {\bibinfo {title} {On the occurrence of mass
  inflation for the einstein--maxwell-scalar field system with a cosmological
  constant and an exponential price law},\ }\href@noop {} {\bibfield  {journal}
  {\bibinfo  {journal} {Communications in Mathematical Physics}\ }\textbf
  {\bibinfo {volume} {361}},\ \bibinfo {pages} {289} (\bibinfo {year}
  {2018})}\BibitemShut {NoStop}%
\bibitem [{\citenamefont {Luna}\ \emph {et~al.}(2019)\citenamefont {Luna},
  \citenamefont {Zilh\~ao}, \citenamefont {Cardoso}, \citenamefont {Costa},\
  and\ \citenamefont {Nat\'ario}}]{Luna:2019}%
  \BibitemOpen
  \bibfield  {author} {\bibinfo {author} {\bibfnamefont {R.}~\bibnamefont
  {Luna}}, \bibinfo {author} {\bibfnamefont {M.}~\bibnamefont {Zilh\~ao}},
  \bibinfo {author} {\bibfnamefont {V.}~\bibnamefont {Cardoso}}, \bibinfo
  {author} {\bibfnamefont {J.~a.~L.}\ \bibnamefont {Costa}},\ and\ \bibinfo
  {author} {\bibfnamefont {J.}~\bibnamefont {Nat\'ario}},\ }\bibfield  {title}
  {\bibinfo {title} {{Strong cosmic censorship: The nonlinear story}},\ }\href
  {https://doi.org/10.1103/PhysRevD.99.064014} {\bibfield  {journal} {\bibinfo
  {journal} {Phys. Rev. D}\ }\textbf {\bibinfo {volume} {99}},\ \bibinfo
  {pages} {064014} (\bibinfo {year} {2019})},\ \bibinfo {note} {[Addendum:
  Phys.Rev.D 103, 104043 (2021)]},\ \Eprint {https://arxiv.org/abs/1810.00886}
  {arXiv:1810.00886 [gr-qc]} \BibitemShut {NoStop}%
\bibitem [{\citenamefont {Hollands}\ \emph
  {et~al.}(2020{\natexlab{a}})\citenamefont {Hollands}, \citenamefont {Wald},\
  and\ \citenamefont {Zahn}}]{Hollands:2019}%
  \BibitemOpen
  \bibfield  {author} {\bibinfo {author} {\bibfnamefont {S.}~\bibnamefont
  {Hollands}}, \bibinfo {author} {\bibfnamefont {R.~M.}\ \bibnamefont {Wald}},\
  and\ \bibinfo {author} {\bibfnamefont {J.}~\bibnamefont {Zahn}},\ }\bibfield
  {title} {\bibinfo {title} {{Quantum instability of the Cauchy horizon in
  Reissner\textendash Nordstr\"om\textendash deSitter spacetime}},\ }\href
  {https://doi.org/10.1088/1361-6382/ab8052} {\bibfield  {journal} {\bibinfo
  {journal} {Class. Quant. Grav.}\ }\textbf {\bibinfo {volume} {37}},\ \bibinfo
  {pages} {115009} (\bibinfo {year} {2020}{\natexlab{a}})},\ \Eprint
  {https://arxiv.org/abs/1912.06047} {arXiv:1912.06047 [gr-qc]} \BibitemShut
  {NoStop}%
\bibitem [{\citenamefont {Hollands}\ \emph
  {et~al.}(2020{\natexlab{b}})\citenamefont {Hollands}, \citenamefont {Klein},\
  and\ \citenamefont {Zahn}}]{Hollands:2020}%
  \BibitemOpen
  \bibfield  {author} {\bibinfo {author} {\bibfnamefont {S.}~\bibnamefont
  {Hollands}}, \bibinfo {author} {\bibfnamefont {C.}~\bibnamefont {Klein}},\
  and\ \bibinfo {author} {\bibfnamefont {J.}~\bibnamefont {Zahn}},\ }\bibfield
  {title} {\bibinfo {title} {{Quantum stress tensor at the Cauchy horizon of
  the Reissner\textendash{}Nordstr\"om\textendash{}de Sitter spacetime}},\
  }\href {https://doi.org/10.1103/PhysRevD.102.085004} {\bibfield  {journal}
  {\bibinfo  {journal} {Phys. Rev. D}\ }\textbf {\bibinfo {volume} {102}},\
  \bibinfo {pages} {085004} (\bibinfo {year} {2020}{\natexlab{b}})},\ \Eprint
  {https://arxiv.org/abs/2006.10991} {arXiv:2006.10991 [gr-qc]} \BibitemShut
  {NoStop}%
\bibitem [{\citenamefont {Zilberman}\ \emph {et~al.}(2020)\citenamefont
  {Zilberman}, \citenamefont {Levi},\ and\ \citenamefont
  {Ori}}]{Zilberman:2019}%
  \BibitemOpen
  \bibfield  {author} {\bibinfo {author} {\bibfnamefont {N.}~\bibnamefont
  {Zilberman}}, \bibinfo {author} {\bibfnamefont {A.}~\bibnamefont {Levi}},\
  and\ \bibinfo {author} {\bibfnamefont {A.}~\bibnamefont {Ori}},\ }\bibfield
  {title} {\bibinfo {title} {{Quantum fluxes at the inner horizon of a
  spherical charged black hole}},\ }\href
  {https://doi.org/10.1103/PhysRevLett.124.171302} {\bibfield  {journal}
  {\bibinfo  {journal} {Phys. Rev. Lett.}\ }\textbf {\bibinfo {volume} {124}},\
  \bibinfo {pages} {171302} (\bibinfo {year} {2020})},\ \Eprint
  {https://arxiv.org/abs/1906.11303} {arXiv:1906.11303 [gr-qc]} \BibitemShut
  {NoStop}%
\bibitem [{\citenamefont {Zilberman}\ \emph {et~al.}(2022)\citenamefont
  {Zilberman}, \citenamefont {Casals}, \citenamefont {Ori},\ and\ \citenamefont
  {Ottewill}}]{Zilberman:2022}%
  \BibitemOpen
  \bibfield  {author} {\bibinfo {author} {\bibfnamefont {N.}~\bibnamefont
  {Zilberman}}, \bibinfo {author} {\bibfnamefont {M.}~\bibnamefont {Casals}},
  \bibinfo {author} {\bibfnamefont {A.}~\bibnamefont {Ori}},\ and\ \bibinfo
  {author} {\bibfnamefont {A.~C.}\ \bibnamefont {Ottewill}},\ }\bibfield
  {title} {\bibinfo {title} {{Quantum Fluxes at the Inner Horizon of a Spinning
  Black Hole}},\ }\href {https://doi.org/10.1103/PhysRevLett.129.261102}
  {\bibfield  {journal} {\bibinfo  {journal} {Phys. Rev. Lett.}\ }\textbf
  {\bibinfo {volume} {129}},\ \bibinfo {pages} {261102} (\bibinfo {year}
  {2022})},\ \Eprint {https://arxiv.org/abs/2203.08502} {arXiv:2203.08502
  [gr-qc]} \BibitemShut {NoStop}%
\bibitem [{\citenamefont {Dyatlov}(2011)}]{Dyatlov:2011}%
  \BibitemOpen
  \bibfield  {author} {\bibinfo {author} {\bibfnamefont {S.}~\bibnamefont
  {Dyatlov}},\ }\bibfield  {title} {\bibinfo {title} {Quasi-normal modes and
  exponential energy decay for the {K}err--de {S}itter black hole},\
  }\href@noop {} {\bibfield  {journal} {\bibinfo  {journal} {Comm. Math.
  Phys.}\ }\textbf {\bibinfo {volume} {306}},\ \bibinfo {pages} {119} (\bibinfo
  {year} {2011})}\BibitemShut {NoStop}%
\bibitem [{\citenamefont {Hintz}(2021)}]{Hintz:2021}%
  \BibitemOpen
  \bibfield  {author} {\bibinfo {author} {\bibfnamefont {P.}~\bibnamefont
  {Hintz}},\ }\bibfield  {title} {\bibinfo {title} {Mode stability and shallow
  quasinormal modes of {K}err--de {S}itter black holes away from extremality},\
  }\href@noop {} {\bibfield  {journal} {\bibinfo  {journal} {Preprint,
  arXiv:2112.14431}\ } (\bibinfo {year} {2021})}\BibitemShut {NoStop}%
\bibitem [{\citenamefont {Hintz}\ and\ \citenamefont
  {Vasy}(2015)}]{HintzVasy:2015}%
  \BibitemOpen
  \bibfield  {author} {\bibinfo {author} {\bibfnamefont {P.}~\bibnamefont
  {Hintz}}\ and\ \bibinfo {author} {\bibfnamefont {A.}~\bibnamefont {Vasy}},\
  }\bibfield  {title} {\bibinfo {title} {Semilinear wave equations on
  asymptotically de {S}itter, {K}err--de {S}itter and {M}inkowski spacetimes},\
  }\href {https://doi.org/10.2140/apde.2015.8.1807} {\bibfield  {journal}
  {\bibinfo  {journal} {Anal. PDE}\ }\textbf {\bibinfo {volume} {8}},\ \bibinfo
  {pages} {1807} (\bibinfo {year} {2015})}\BibitemShut {NoStop}%
\bibitem [{\citenamefont {Yoshida}\ \emph {et~al.}(2010)\citenamefont
  {Yoshida}, \citenamefont {Uchikata},\ and\ \citenamefont
  {Futamase}}]{Yoshida:2010}%
  \BibitemOpen
  \bibfield  {author} {\bibinfo {author} {\bibfnamefont {S.}~\bibnamefont
  {Yoshida}}, \bibinfo {author} {\bibfnamefont {N.}~\bibnamefont {Uchikata}},\
  and\ \bibinfo {author} {\bibfnamefont {T.}~\bibnamefont {Futamase}},\
  }\bibfield  {title} {\bibinfo {title} {Quasinormal modes of {K}err--de
  {S}itter black holes},\ }\href@noop {} {\bibfield  {journal} {\bibinfo
  {journal} {Physical Review D}\ }\textbf {\bibinfo {volume} {81}},\ \bibinfo
  {pages} {044005} (\bibinfo {year} {2010})}\BibitemShut {NoStop}%
\bibitem [{\citenamefont {Hatsuda}(2020)}]{Hatsuda:2020}%
  \BibitemOpen
  \bibfield  {author} {\bibinfo {author} {\bibfnamefont {Y.}~\bibnamefont
  {Hatsuda}},\ }\bibfield  {title} {\bibinfo {title} {{Q}uasinormal modes of
  {K}err--de {S}itter black holes via the {H}eun function},\ }\href
  {https://doi.org/10.1088/1361-6382/abc82e} {\bibfield  {journal} {\bibinfo
  {journal} {Classical and Quantum Gravity}\ }\textbf {\bibinfo {volume}
  {38}},\ \bibinfo {pages} {025015} (\bibinfo {year} {2020})}\BibitemShut
  {NoStop}%
\bibitem [{\citenamefont {Borthwick}(2018)}]{Borthwick:2018}%
  \BibitemOpen
  \bibfield  {author} {\bibinfo {author} {\bibfnamefont {J.}~\bibnamefont
  {Borthwick}},\ }\bibfield  {title} {\bibinfo {title} {{Maximal
  Kerr\textendash{}de Sitter spacetimes}},\ }\href
  {https://doi.org/10.1088/1361-6382/aae3dc} {\bibfield  {journal} {\bibinfo
  {journal} {Class. Quant. Grav.}\ }\textbf {\bibinfo {volume} {35}},\ \bibinfo
  {pages} {215006} (\bibinfo {year} {2018})},\ \Eprint
  {https://arxiv.org/abs/1805.00243} {arXiv:1805.00243 [gr-qc]} \BibitemShut
  {NoStop}%
\bibitem [{\citenamefont {Mokdad}(2017)}]{Mokdad:2017}%
  \BibitemOpen
  \bibfield  {author} {\bibinfo {author} {\bibfnamefont {M.}~\bibnamefont
  {Mokdad}},\ }\bibfield  {title} {\bibinfo {title}
  {{Reissner\textendash{}Nordstr\o{}m\textendash{}de Sitter manifold: photon
  sphere and maximal analytic extension}},\ }\href
  {https://doi.org/10.1088/1361-6382/aa8054} {\bibfield  {journal} {\bibinfo
  {journal} {Class. Quant. Grav.}\ }\textbf {\bibinfo {volume} {34}},\ \bibinfo
  {pages} {175014} (\bibinfo {year} {2017})},\ \Eprint
  {https://arxiv.org/abs/1701.06982} {arXiv:1701.06982 [math.DG]} \BibitemShut
  {NoStop}%
\bibitem [{\citenamefont {Klein}(2022)}]{Klein:2022}%
  \BibitemOpen
  \bibfield  {author} {\bibinfo {author} {\bibfnamefont {C.~K.~M.}\
  \bibnamefont {Klein}},\ }\bibfield  {title} {\bibinfo {title} {{Construction
  of the Unruh State for a Real Scalar Field on the Kerr-de Sitter
  Spacetime}},\ }\bibfield  {journal} {\bibinfo  {journal} {Annales Henri
  Poincaré}\ }\href {https://doi.org/10.1007/s00023-023-01273-6}
  {10.1007/s00023-023-01273-6} (\bibinfo {year} {2022}),\ \Eprint
  {https://arxiv.org/abs/2206.05073} {arXiv:2206.05073 [gr-qc]} \BibitemShut
  {NoStop}%
\bibitem [{\citenamefont {Radzikowski}(1996)}]{Radzikowski:1996}%
  \BibitemOpen
  \bibfield  {author} {\bibinfo {author} {\bibfnamefont {M.~J.}\ \bibnamefont
  {Radzikowski}},\ }\bibfield  {title} {\bibinfo {title} {{Micro-local approach
  to the Hadamard condition in quantum field theory on curved space-time}},\
  }\href {https://doi.org/10.1007/BF02100096} {\bibfield  {journal} {\bibinfo
  {journal} {Commun. Math. Phys.}\ }\textbf {\bibinfo {volume} {179}},\
  \bibinfo {pages} {529} (\bibinfo {year} {1996})}\BibitemShut {NoStop}%
\bibitem [{\citenamefont {Hollands}\ and\ \citenamefont
  {Wald}(2001)}]{Hollands:2001}%
  \BibitemOpen
  \bibfield  {author} {\bibinfo {author} {\bibfnamefont {S.}~\bibnamefont
  {Hollands}}\ and\ \bibinfo {author} {\bibfnamefont {R.~M.}\ \bibnamefont
  {Wald}},\ }\bibfield  {title} {\bibinfo {title} {{Local Wick polynomials and
  time ordered products of quantum fields in curved space-time}},\ }\href
  {https://doi.org/10.1007/s002200100540} {\bibfield  {journal} {\bibinfo
  {journal} {Commun. Math. Phys.}\ }\textbf {\bibinfo {volume} {223}},\
  \bibinfo {pages} {289} (\bibinfo {year} {2001})},\ \Eprint
  {https://arxiv.org/abs/gr-qc/0103074} {arXiv:gr-qc/0103074} \BibitemShut
  {NoStop}%
\bibitem [{\citenamefont {Hollands}\ and\ \citenamefont
  {Wald}(2002)}]{Hollands:2001b}%
  \BibitemOpen
  \bibfield  {author} {\bibinfo {author} {\bibfnamefont {S.}~\bibnamefont
  {Hollands}}\ and\ \bibinfo {author} {\bibfnamefont {R.~M.}\ \bibnamefont
  {Wald}},\ }\bibfield  {title} {\bibinfo {title} {{Existence of local
  covariant time ordered products of quantum fields in curved space-time}},\
  }\href {https://doi.org/10.1007/s00220-002-0719-y} {\bibfield  {journal}
  {\bibinfo  {journal} {Commun. Math. Phys.}\ }\textbf {\bibinfo {volume}
  {231}},\ \bibinfo {pages} {309} (\bibinfo {year} {2002})},\ \Eprint
  {https://arxiv.org/abs/gr-qc/0111108} {arXiv:gr-qc/0111108} \BibitemShut
  {NoStop}%
\bibitem [{\citenamefont {Brunetti}\ and\ \citenamefont
  {Fredenhagen}(2000)}]{Brunetti:1999}%
  \BibitemOpen
  \bibfield  {author} {\bibinfo {author} {\bibfnamefont {R.}~\bibnamefont
  {Brunetti}}\ and\ \bibinfo {author} {\bibfnamefont {K.}~\bibnamefont
  {Fredenhagen}},\ }\bibfield  {title} {\bibinfo {title} {{Microlocal analysis
  and interacting quantum field theories: Renormalization on physical
  backgrounds}},\ }\href {https://doi.org/10.1007/s002200050004} {\bibfield
  {journal} {\bibinfo  {journal} {Commun. Math. Phys.}\ }\textbf {\bibinfo
  {volume} {208}},\ \bibinfo {pages} {623} (\bibinfo {year} {2000})},\ \Eprint
  {https://arxiv.org/abs/math-ph/9903028} {arXiv:math-ph/9903028} \BibitemShut
  {NoStop}%
\bibitem [{\citenamefont {Fewster}\ and\ \citenamefont
  {Verch}(2013)}]{Fewster:2013}%
  \BibitemOpen
  \bibfield  {author} {\bibinfo {author} {\bibfnamefont {C.~J.}\ \bibnamefont
  {Fewster}}\ and\ \bibinfo {author} {\bibfnamefont {R.}~\bibnamefont
  {Verch}},\ }\bibfield  {title} {\bibinfo {title} {{The Necessity of the
  Hadamard Condition}},\ }\href
  {https://doi.org/10.1088/0264-9381/30/23/235027} {\bibfield  {journal}
  {\bibinfo  {journal} {Class. Quant. Grav.}\ }\textbf {\bibinfo {volume}
  {30}},\ \bibinfo {pages} {235027} (\bibinfo {year} {2013})},\ \Eprint
  {https://arxiv.org/abs/1307.5242} {arXiv:1307.5242 [gr-qc]} \BibitemShut
  {NoStop}%
\bibitem [{\citenamefont {Petersen}\ and\ \citenamefont
  {Vasy}(2021)}]{Petersen:2021}%
  \BibitemOpen
  \bibfield  {author} {\bibinfo {author} {\bibfnamefont {O.~L.}\ \bibnamefont
  {Petersen}}\ and\ \bibinfo {author} {\bibfnamefont {A.}~\bibnamefont
  {Vasy}},\ }\bibfield  {title} {\bibinfo {title} {Wave equations in the
  {K}err--de {S}itter spacetime: the full subextremal range},\ }\href@noop {}
  {\bibfield  {journal} {\bibinfo  {journal} {Preprint, arXiv:2112.01355}\ }
  (\bibinfo {year} {2021})}\BibitemShut {NoStop}%
\bibitem [{\citenamefont {Dyatlov}(2012)}]{Dyatlov:2012}%
  \BibitemOpen
  \bibfield  {author} {\bibinfo {author} {\bibfnamefont {S.}~\bibnamefont
  {Dyatlov}},\ }\bibfield  {title} {\bibinfo {title} {Asymptotic distribution
  of quasi-normal modes for {K}err--de {S}itter black holes},\ }\href@noop {}
  {\bibfield  {journal} {\bibinfo  {journal} {Annales Henri Poincar{\'e}}\
  }\textbf {\bibinfo {volume} {13}},\ \bibinfo {pages} {1101} (\bibinfo {year}
  {2012})}\BibitemShut {NoStop}%
\bibitem [{\citenamefont {Kehle}\ and\ \citenamefont
  {Shlapentokh-Rothman}(2019)}]{Kehle:2019}%
  \BibitemOpen
  \bibfield  {author} {\bibinfo {author} {\bibfnamefont {C.}~\bibnamefont
  {Kehle}}\ and\ \bibinfo {author} {\bibfnamefont {Y.}~\bibnamefont
  {Shlapentokh-Rothman}},\ }\bibfield  {title} {\bibinfo {title} {A scattering
  theory for linear waves on the interior of reissner--nordstr{\"o}m black
  holes},\ }\href@noop {} {\bibfield  {journal} {\bibinfo  {journal} {Annales
  Henri Poincar{\'e}}\ }\textbf {\bibinfo {volume} {20}},\ \bibinfo {pages}
  {1583} (\bibinfo {year} {2019})}\BibitemShut {NoStop}%
\bibitem [{\citenamefont {Hintz}\ and\ \citenamefont
  {Vasy}(2023)}]{Hintz:2023}%
  \BibitemOpen
  \bibfield  {author} {\bibinfo {author} {\bibfnamefont {P.}~\bibnamefont
  {Hintz}}\ and\ \bibinfo {author} {\bibfnamefont {A.}~\bibnamefont {Vasy}},\
  }\bibfield  {title} {\bibinfo {title} {Microlocal analysis near null infinity
  in asymptotically flat spacetimes},\ }\href@noop {} {\bibfield  {journal}
  {\bibinfo  {journal} {Preprint, arXiv:2302.14613}\ } (\bibinfo {year}
  {2023})}\BibitemShut {NoStop}%
\bibitem [{\citenamefont {Hintz}(2017)}]{Hintz:2017}%
  \BibitemOpen
  \bibfield  {author} {\bibinfo {author} {\bibfnamefont {P.}~\bibnamefont
  {Hintz}},\ }\bibfield  {title} {\bibinfo {title} {{Boundedness and decay of
  scalar waves at the Cauchy horizon of the Kerr spacetime}},\ }\href@noop {}
  {\bibfield  {journal} {\bibinfo  {journal} {Commentarii Mathematici
  Helvetici}\ }\textbf {\bibinfo {volume} {92}},\ \bibinfo {pages} {801}
  (\bibinfo {year} {2017})}\BibitemShut {NoStop}%
\bibitem [{\citenamefont
  {Shlapentokh-Rothman}(2015)}]{ShlapentokhRothman:2015}%
  \BibitemOpen
  \bibfield  {author} {\bibinfo {author} {\bibfnamefont {Y.}~\bibnamefont
  {Shlapentokh-Rothman}},\ }\bibfield  {title} {\bibinfo {title} {Quantitative
  mode stability for the wave equation on the kerr spacetime},\ }\href@noop {}
  {\bibfield  {journal} {\bibinfo  {journal} {Annales Henri Poincar{\'e}}\
  }\textbf {\bibinfo {volume} {16}},\ \bibinfo {pages} {289} (\bibinfo {year}
  {2015})}\BibitemShut {NoStop}%
\bibitem [{\citenamefont {Bony}\ and\ \citenamefont
  {H{\"a}fner}(2008)}]{Bony:2008}%
  \BibitemOpen
  \bibfield  {author} {\bibinfo {author} {\bibfnamefont {J.-F.}\ \bibnamefont
  {Bony}}\ and\ \bibinfo {author} {\bibfnamefont {D.}~\bibnamefont
  {H{\"a}fner}},\ }\bibfield  {title} {\bibinfo {title} {Decay and non-decay of
  the local energy for the wave equation on the de {S}itter--{S}chwarzschild
  metric},\ }\href@noop {} {\bibfield  {journal} {\bibinfo  {journal}
  {Communications in Mathematical Physics}\ }\textbf {\bibinfo {volume}
  {282}},\ \bibinfo {pages} {697} (\bibinfo {year} {2008})}\BibitemShut
  {NoStop}%
\bibitem [{\citenamefont {Besset}\ and\ \citenamefont
  {H{\"a}fner}(2021)}]{Besset:2021}%
  \BibitemOpen
  \bibfield  {author} {\bibinfo {author} {\bibfnamefont {N.}~\bibnamefont
  {Besset}}\ and\ \bibinfo {author} {\bibfnamefont {D.}~\bibnamefont
  {H{\"a}fner}},\ }\bibfield  {title} {\bibinfo {title} {Existence of
  exponentially growing finite energy solutions for the charged
  {K}lein--{G}ordon equation on the {D}e {S}itter--{K}err--{N}ewman metric},\
  }\href@noop {} {\bibfield  {journal} {\bibinfo  {journal} {Journal of
  Hyperbolic Differential Equations}\ }\textbf {\bibinfo {volume} {18}},\
  \bibinfo {pages} {293} (\bibinfo {year} {2021})}\BibitemShut {NoStop}%
\bibitem [{\citenamefont {Unruh}(1976)}]{Unruh:1976}%
  \BibitemOpen
  \bibfield  {author} {\bibinfo {author} {\bibfnamefont {W.}~\bibnamefont
  {Unruh}},\ }\bibfield  {title} {\bibinfo {title} {{Notes on black hole
  evaporation}},\ }\href {https://doi.org/10.1103/PhysRevD.14.870} {\bibfield
  {journal} {\bibinfo  {journal} {Phys. Rev. D}\ }\textbf {\bibinfo {volume}
  {14}},\ \bibinfo {pages} {870} (\bibinfo {year} {1976})}\BibitemShut
  {NoStop}%
\bibitem [{\citenamefont {Dappiaggi}\ \emph {et~al.}(2011)\citenamefont
  {Dappiaggi}, \citenamefont {Moretti},\ and\ \citenamefont
  {Pinamonti}}]{Dappiaggi:2009}%
  \BibitemOpen
  \bibfield  {author} {\bibinfo {author} {\bibfnamefont {C.}~\bibnamefont
  {Dappiaggi}}, \bibinfo {author} {\bibfnamefont {V.}~\bibnamefont {Moretti}},\
  and\ \bibinfo {author} {\bibfnamefont {N.}~\bibnamefont {Pinamonti}},\
  }\bibfield  {title} {\bibinfo {title} {{Rigorous construction and Hadamard
  property of the Unruh state in Schwarzschild spacetime}},\ }\href
  {https://doi.org/10.4310/ATMP.2011.v15.n2.a4} {\bibfield  {journal} {\bibinfo
   {journal} {Adv. Theor. Math. Phys.}\ }\textbf {\bibinfo {volume} {15}},\
  \bibinfo {pages} {355} (\bibinfo {year} {2011})},\ \Eprint
  {https://arxiv.org/abs/0907.1034} {arXiv:0907.1034 [gr-qc]} \BibitemShut
  {NoStop}%
\bibitem [{\citenamefont {Brum}\ and\ \citenamefont
  {Jor\'as}(2015)}]{Brum:2014}%
  \BibitemOpen
  \bibfield  {author} {\bibinfo {author} {\bibfnamefont {M.}~\bibnamefont
  {Brum}}\ and\ \bibinfo {author} {\bibfnamefont {S.~E.}\ \bibnamefont
  {Jor\'as}},\ }\bibfield  {title} {\bibinfo {title} {{Hadamard state in
  Schwarzschild--de Sitter spacetime}},\ }\href
  {https://doi.org/10.1088/0264-9381/32/1/015013} {\bibfield  {journal}
  {\bibinfo  {journal} {Class. Quant. Grav.}\ }\textbf {\bibinfo {volume}
  {32}},\ \bibinfo {pages} {015013} (\bibinfo {year} {2015})},\ \Eprint
  {https://arxiv.org/abs/1405.7916} {arXiv:1405.7916 [gr-qc]} \BibitemShut
  {NoStop}%
\bibitem [{\citenamefont {G\'erard}\ \emph {et~al.}(2020)\citenamefont
  {G\'erard}, \citenamefont {H\"afner},\ and\ \citenamefont
  {Wrochna}}]{Gerard:2020}%
  \BibitemOpen
  \bibfield  {author} {\bibinfo {author} {\bibfnamefont {C.}~\bibnamefont
  {G\'erard}}, \bibinfo {author} {\bibfnamefont {D.}~\bibnamefont {H\"afner}},\
  and\ \bibinfo {author} {\bibfnamefont {M.}~\bibnamefont {Wrochna}},\
  }\bibfield  {title} {\bibinfo {title} {{The Unruh state for massless fermions
  on Kerr spacetime and its Hadamard property}},\ }\href@noop {} {\bibfield
  {journal} {\bibinfo  {journal} {Preprint, arXiv}\ } (\bibinfo {year}
  {2020})},\ \Eprint {https://arxiv.org/abs/2008.10995} {arXiv:2008.10995
  [math-ph]} \BibitemShut {NoStop}%
\bibitem [{\citenamefont {Klein}\ \emph {et~al.}(2024)\citenamefont {Klein},
  \citenamefont {Soltani}, \citenamefont {Casals},\ and\ \citenamefont
  {Hollands}}]{CHKS:2023}%
  \BibitemOpen
  \bibfield  {author} {\bibinfo {author} {\bibfnamefont {C.}~\bibnamefont
  {Klein}}, \bibinfo {author} {\bibfnamefont {M.}~\bibnamefont {Soltani}},
  \bibinfo {author} {\bibfnamefont {M.}~\bibnamefont {Casals}},\ and\ \bibinfo
  {author} {\bibfnamefont {S.}~\bibnamefont {Hollands}},\ }\bibfield  {title}
  {\bibinfo {title} {{Infinite Quantum Twisting at the Cauchy Horizon of
  Rotating Black Holes}},\ }\href@noop {} {\bibfield  {journal} {\bibinfo
  {journal} {Preprint, arXiv}\ } (\bibinfo {year} {2024})},\ \bibinfo {note}
  {accepted for publication in PRL},\ \Eprint
  {https://arxiv.org/abs/2402.14171} {arXiv:2402.14171 [gr-qc]} \BibitemShut
  {NoStop}%
\bibitem [{\citenamefont {Verch}(1994)}]{Verch:1992}%
  \BibitemOpen
  \bibfield  {author} {\bibinfo {author} {\bibfnamefont {R.}~\bibnamefont
  {Verch}},\ }\bibfield  {title} {\bibinfo {title} {{Local definiteness,
  primarity and quasiequivalence of quasifree Hadamard quantum states in curved
  space-time}},\ }\href {https://doi.org/10.1007/BF02173427} {\bibfield
  {journal} {\bibinfo  {journal} {Commun. Math. Phys.}\ }\textbf {\bibinfo
  {volume} {160}},\ \bibinfo {pages} {507} (\bibinfo {year}
  {1994})}\BibitemShut {NoStop}%
\bibitem [{\citenamefont {B\"ar}(2015)}]{Baer:2013}%
  \BibitemOpen
  \bibfield  {author} {\bibinfo {author} {\bibfnamefont {C.}~\bibnamefont
  {B\"ar}},\ }\bibfield  {title} {\bibinfo {title} {{Green-Hyperbolic Operators
  on Globally Hyperbolic Spacetimes}},\ }\href
  {https://doi.org/10.1007/s00220-014-2097-7} {\bibfield  {journal} {\bibinfo
  {journal} {Commun. Math. Phys.}\ }\textbf {\bibinfo {volume} {333}},\
  \bibinfo {pages} {1585} (\bibinfo {year} {2015})},\ \Eprint
  {https://arxiv.org/abs/1310.0738} {arXiv:1310.0738 [math-ph]} \BibitemShut
  {NoStop}%
\end{thebibliography}%
\end{document}